\declaretheorem{theorem}
\declaretheorem{definition}
\declaretheorem[numbered=no, name=Myerson's Lemma]{myerson}
\DeclareMathOperator{\E}{\mathbb{E}}
\DeclareMathOperator{\crit}{\text{crit}}
\let\th\relax
\DeclareMathOperator{\th}{\text{th}}
\def\BibTeX{{\rm B\kern-.05em{\sc i\kern-.025em b}\kern-.08em
    T\kern-.1667em\lower.7ex\hbox{E}\kern-.125emX}}
\begin{document}

\title{Two-Stage Auction Mechanism for Long-Term Participation in Crowdsourcing}

\author{Timothy Shin Heng Mak and Albert Y.S. Lam
    \thanks{The authors are with Fano Labs, Hong Kong. Lam is also with the University of Hong Kong (email: \{timothy, albert\}@fano.ai).}
}

\maketitle

\begin{abstract}
Crowdsourcing has become an important tool to collect data for various artificial intelligence applications and auction can be an effective way to allocate work and determine reward in a crowdsourcing platform. In this paper, we focus on the crowdsourcing of small tasks such as image labelling and voice recording where we face a number of challenges. First, workers have different limits on the amount of work they would be willing to do, and they may also misreport these limits in their bid for work. Secondly, if the auction is repeated over time, unsuccessful workers may drop out of the system, reducing competition and diversity. To tackle these issues, we first extend the results of the celebrated Myerson's optimal auction mechanism for a single-parameter bid to the case where the bid consists of the unit cost of work, the maximum amount of work one is willing to do, and the actual work completed. We show that a simple payment mechanism is sufficient to ensure a dominant strategy from the workers, and that this dominant strategy is robust to the true utility function of the workers. Secondly, we propose a novel, flexible work allocation mechanism, which allows the requester to balance between cost efficiency and equality. While cost minimization is obviously important, encouraging equality in the allocation of work increases the diversity of the workforce as well as promotes long-term participation on the crowdsourcing platform. Our main results are proved analytically and validated through simulations. 
\end{abstract}

\begin{IEEEkeywords}
crowdsourcing, mechanism design, auction, return on investment
\end{IEEEkeywords}

    \section{Introduction}
Crowdsourcing has become an established way to gather data of all kinds using the power of the crowds. Many of the most popular sites on the Internet today, such as Stack Overflow (\texttt{www.stackoverflow.com}), Wikipedia (\texttt{www.wikipedia.org}), and YouTube (\texttt{www.youtube.com}) involve some form of crowdsourcing effort from their users. Moreover, as artificial intelligence technology becomes more and more accessible, many companies are looking to crowdsourcing as a means to gather data for model training. Because of the ubiquity of smart phones, crowdsourcing has also taken on a new dimension in mobile crowdsensing (MCS) or participatory sensing, where users with smart devices contribute their location \cite{Zhou2017a} and sensory data, such as traffic data \cite{Mohan2008,Zhou2012}, pollution data \cite{Dutta2009,Sivaraman2013}, as well as market data \cite{Bulusu2008,Deng2009}. 

The rising importance of crowdsensing as an application has spawned a sizeable literature on incentive mechanisms to motivate participation \cite{Restuccia2016,Poesio2017,Capponi2019}. While monetary incentives are clearly effective, other means such as gamification \cite{vonAhn2004,Venhuizen2013} and shared intent \cite{Singh2002,JordanRaddick2013} are also being studied. Recently, a major focus has been on the use of auction and game theoretic mechanisms to motivate users to provide cost-effective and high quality input \cite{Anari2014,Luo2016,Feng2014,Zhou2017a,Zheng2017,Gao2015,Wang2017,Zhao2014,Zhou2018,Fan2016,Lee2010a,Koutsopoulos2013,Yang2016,Qin2017,Zhang2015,Singla2013,Chen2016,Chen2019,Dobakhshari2017,Gong2019,Sooksatra2019,Nie2019,Wang2019b,Xu2021,Lu2021}. Although the majority of these studies are related to mobile crowdsensing, where typical challenges include the design of \emph{location-aware} \cite{Zhou2018,Feng2014,Gao2015} and \emph{online} mechanisms \cite{Zhao2014,Fan2016,Zhou2018}, the focus of this paper is on the general crowdsourcing of repetitive tasks. In particular, our interests are in applications where a large quantity of relatively cheap manual labor is needed, such as in the collection of voice and speech data or in image classification and tagging. For this objective, two types of competitive games may be considered -- auctions and Stackelberg games \cite{Yang2012}. Auction-type games are those in which workers submit bids in the form of the minimum pay they require for the work, and the requester decides the best way to assign work and pay. Stackelberg games are those in which the requester decides on a total amount of money to be spent on the task, and workers submit bid in the form of the amount of work they would like to do. In this study, our focus is on the design of a suitable \emph{auction} mechanism. 

In particular, we are interested in a design that ensures good quality from the contributed work, for which a variety of approaches have been proposed in the literature. One approach is exemplified by \cite{Zheng2017} and \cite{Gong2019}, which require a worker to submit an indication of the quality of work beforehand, and the worker is penalized for failing to meet the declared quality. Another approach is provided by \cite{Wang2017}, \cite{Qin2017}, and \cite{Dobakhshari2017}, where the worker's quality is modelled on their past performance on the platform, and this information is taken into account in the worker's future bid or pay for work. \cite{Koutsopoulos2013} formulates quality as a constraint criteria in work allocation, which is assumed to be the sum of concave functions of the quantity of work assigned to each worker. 

Apart from work quality, ensuring continual participation from workers is another important issue in the design of a crowdsourcing platform. As pointed out by \cite{Lee2010a,Gao2015,Sooksatra2019}, workers face both direct and indirect costs when they decide to join a crowdsourcing platform. Yet, most of the research on auction mechanisms only takes into account of costs resulting directly from the performance of the work, and ignores such other costs as those involved in joining the platform in the first place. This is understandable, since workers who do not join the platform in the first place cannot reveal anything about their preference through bidding. In the minority of cases where long-term participation is considered, researchers typically model the Return on Investment ($ROI$) of a worker. 
Workers are expected to stay on the platform if their $ROI$ is above a certain worker-specific threshold. Generally, the $ROI$ can be increased by increasing a worker's probability of obtaining work. In the work of \cite{Lee2010a}, \cite{Gao2015}, and \cite{Sooksatra2019}, this is achieved by increasing the probability of obtaining work for workers whose bids are otherwise too high to be given work. For example, \cite{Lee2010a} introduced a system of virtual credits, whereby workers who fail to be allocated work are given virtual credits which would enable them to gain an advantage in subsequent auctions. \cite{Gao2015} introduced a stochastic constraint where they required the long term average of the allocation to a particular worker to be greater than a particular level. \cite{Sooksatra2019} formulated an objective in the optimization which favoured participation from different workers over assigning simply to those with the lowest costs. None of the above studies attempted to quantify or control the increase in cost due to not optimizing for the most cost-effective solution. 

In this work, we propose a two-stage auction mechanism that is particularly suitable for the crowdsourcing of small tasks, where the cost per unit of work is relatively small. In contrast to the crowdsensing literature \cite{Lee2010a,Gao2015,Sooksatra2019}, where the submission of sensing data is often automatic, in the crowdsourcing of small tasks, there is a real possibility where workers may fail (deliberately or otherwise) to complete all of the work assigned them. In this type of crowdsourcing, workers can typically afford to work on a large number of units, although there would be a limit due to time constraints and diminishing marginal returns. Moreover, it is difficult for a worker to ensure all of their work meets the quality criteria set by the requester, and therefore the requester needs to reward the worker based on the actual amount of satisfactory work. To achieve this, we first extend the optimality result of Myerson \cite{Myerson1981} and Koutsopoulos \cite{Koutsopoulos2013} to the realistic scenario where the ``bid'' of a worker consists of three parameters: the unit cost of work, the maximum work one is willing to do, and the actual work done. We prove that this two-stage auction mechanism is strategy-proof in the sense that there exists a dominant strategy for the workers to bid as well as submit work. Secondly, we propose a flexible mechanism which allows the requester to balance efficiency and equality in the allocation of work. While efficiency is essential for cost minimization, having a more equal allocation of work can increase participation probability and hence retention of workers on the auction platform. This is in turn beneficial for the requester in two ways. First, it increases competition for the work in the long term. Secondly, the quality of many tasks can benefit from an increase in the diversity of the workforce. This is particularly important when it comes to data collection for artificial intelligence applications. To the best of our knowledge, this represents the first non-crowdsensing study of long-term participation in crowdsourcing platforms. 

This rest of the paper is structured as follows. Section \ref{sec:review} reviews the standard theory on optimal reverse auction. Section \ref{sec:model} introduces and presents an analysis for our two-stage auction mechanism, including the proofs that the mechanism is strategy-proof and individual rational. Section \ref{sec:analysis} presents the analysis of our proposed allocation mechanism for balancing cost effectiveness and allocation equality. Section \ref{sec:simulations} verifies the main results with simulations. The final section gives conclusions from the present paper. 

\section{Background} \label{sec:review}
Our crowdsourcing mechanism is a form of reverse auction. In a reverse auction, a requester invites tenders from workers for work to be done. Workers indicate the preferred (monetary) compensation for the work by submitting bids and typically the worker(s) with the lowest bid(s) wins. It is customary to represent the allocation of work and payment with two vectors, $\boldsymbol{x}$ and $\boldsymbol{p}$, respectively, which are themselves functions of the vector of bids $\boldsymbol{b}$. For example, in an auction with three bids from three workers, an allocation vector of $\boldsymbol{x}=(1,2,0)$ and a payment vector of $\boldsymbol{p}=(100, 250, 0)$ mean that Worker 1 is being paid \$100 for 1 unit of work, Worker 2 \$250 for 2 units, and Worker 3 \$0 for 0 units. Supposing the cost of worker $i$ performing one unit of work is $v_i$, an auction mechanism $(\boldsymbol{x}, \boldsymbol{p})$ is known as \emph{Dominant Strategy Incentive Compatible} (DSIC) if placing the bid $b_i=v_i$ always maximizes Worker $i$'s utility $u_i$ for all $i$, where 
\begin{equation*}
    u_i = p_i - x_i v_i,
\end{equation*}
with $p_i$ being the payment and $x_i$ is the quantity of work allocated to Worker $i$. 

Denoting by $\bar{b}$ the maximum possible bid and $\boldsymbol{b}_{-i}$ the bids of all other bidders apart from Worker $i$, a fundamental Lemma to the theory of auction is the following, given first by \cite{Myerson1981}, and adapted by \cite{Koutsopoulos2013} for the reverse auction. 
\begin{myerson} 
An auction mechanism is DSIC if and only if 
\begin{enumerate}
    \item $x_i(\boldsymbol{b}) \equiv x_i(b_i, \boldsymbol{b}_{-i})$ is a non-increasing function on $b_i$ for all $i$. 
    \item $p_i(\boldsymbol{b})$ has the following form
    \begin{equation}
    p_i(\boldsymbol{b}) = b_ix_i(\boldsymbol{b}) + \int_{b_i}^{\bar{b}} x_i(s, \boldsymbol{b}_{-i}) ds + \kappa \label{eq:Myerson}
    \end{equation}
    where  $\kappa$ is a constant.
\end{enumerate}
\end{myerson}
Defining 
\begin{equation}
    \delta_i(b_i) = b_i + \frac{F(b_i)}{f(b_i)} \label{eq:delta}
\end{equation}
where $f(b_i)$ and $F(b_i)$ denote the probability density function and distribution of $b_i$ respectively, it was shown in \cite{Myerson1981,Koutsopoulos2013} that the auction minimizes cost for the requester if 
\begin{enumerate}
    \item $\boldsymbol{x}(\boldsymbol{b})$ is chosen to minimize $C = \sum_i x_i \delta_i(b_i)$, and 
    \item $\delta_i(b_i)$ is monotonic increasing in $b_i$
\end{enumerate}
This forms the basis of designing so-called \emph{optimal} auctions which are DSIC as well as cost-minimizing. Moreover, \cite{Koutsopoulos2013} showed that one can impose constraints on $\boldsymbol{x}$ in the form of $g(\boldsymbol{x}) = c$ where $g(.)$ is a concave function. The concavity of $g(.)$ preserves the monotonicity of $x_i(.)$ as a function on $b_i$ and thus the optimality of the auction. 

\section{A two-stage reverse auction mechanism} \label{sec:model}
In this section, we present the mechanism design of the two-stage reverse auction and then give our analytic results on its strategyproofness and individual rational property.

\subsection{Mechanism Design}
To adapt the standard reverse auction for crowdsourcing, we design a two-stage auction mechanism, with its workflow illustrated in Fig.~\ref{fig:flowchart}. In brief, we assume three parties in a typical round of crowdsourcing on the platform -- the \emph{\textcolor{Purple}{requester}}, the \emph{\textcolor{Green}{worker}}, and the \emph{\textcolor{Salmon}{platform}}. While the requester and the platform are assumed to act honestly, the worker is expected to submit bids in two stages, which may or may not be honest representations of his beliefs. (In order to improve the flow of the text, hereafter a worker will always be represented by ``he'' and the requester by ``she''.) In the first stage, the requester first submits a task for workers to work on. The expected quality of the work is given, together with the payment scheme, so that the worker has full information to consider his bid. The minimum payment per unit of work Worker $i$ is willing to receive and the maximum amount of work he is willing to do are denoted by $v_i$ and $x^{\max}_i$, respectively. Based on these values, he submits the bid tuple $(b_i, \hat{x}^{\max}_i)$ (the first stage bid). Note that $b_i$ and $\hat{x}^{\max}_i$ could be different from $v_i$ and $x^{\max}_i$. After all the bids are collected from all potential workers, the platform determines the expected work ($x_i$) and maximum pay ($p_i$) for each worker, where $p_i$ corresponds to the pay Worker $i$ would receive if he submits $x_i$ units of work and all $x_i$ units are satisfactory. In the second stage, the workers submit their work to the platform. The number of units of work for Worker $i$ is denoted $\hat{x}_i$ and can be different from $x_i$. The requester then assesses at most $x_i$ units of the work and determines that a proportion $\alpha_i = \tilde{x}_i / \min(x_i, \hat{x}_i)$ of the work is acceptable. The pay the worker receives is then calculated as $\tilde{p}_i=p_i\tilde{x}_i/x_i$. To aid the reading of this paper, a table of the symbols used is given in Table \ref{tab:symbols}. 
\begin{figure}[!t]
    \centering
    \includegraphics[width=\linewidth]{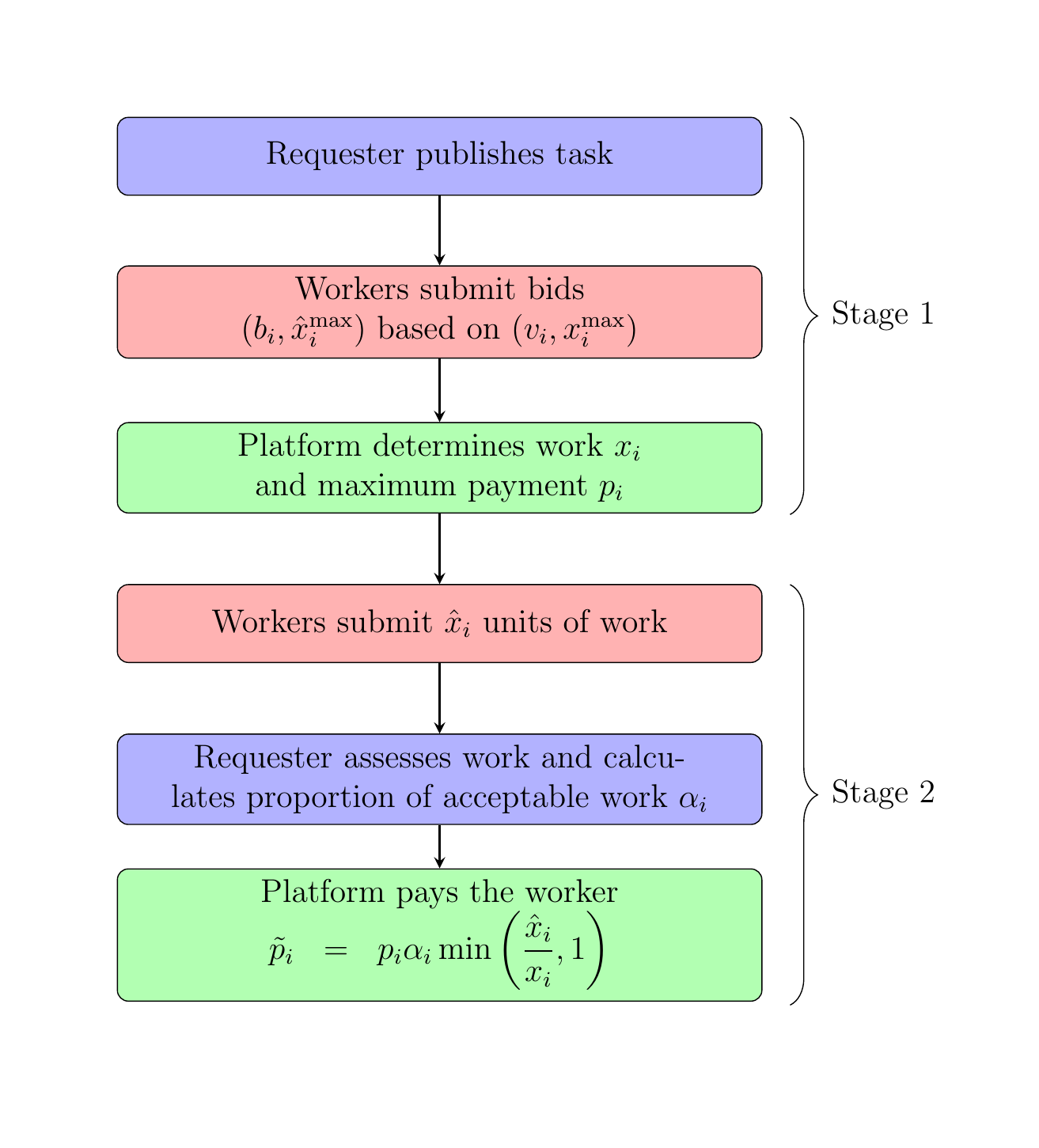}
    \caption{A two-stage reverse auction mechanism for crowdsourcing}
    \label{fig:flowchart}
\end{figure}

\begin{table}[!t]
    \caption{Description of symbols} \label{tab:symbols}
    \centering
    \begin{tabular}{lp{5cm}}
        \hline
        Parameter & Description \\
        \hline
        \multicolumn{2}{l}{\textit{Private values}} \\
        $v_i$ & Minimum unit payment worker $i$ would accept for work \\ 
        $x^{\max}_i$ & Maximum amount of work worker $i$ is willing to do \\
        \multicolumn{2}{l}{\textit{Bid}} \\
        $b_i$ & Bid corresponding to $v_i$ \\
        $\hat{x}^{\max}_i$ & Bid corresponding to $x^{\max}_i$ \\
        $\hat{x}_i$ & The actual amount of work done by worker $i$ \\
        \multicolumn{2}{l}{\textit{Platform calculated values}} \\
        $\boldsymbol{x}=(x_1, \ldots, x_n)$ & Amount of work to workers \\
        $\boldsymbol{p}=(p_1, \ldots, p_n)$ & Maximum promised payment to workers \\
        $\alpha_i$ & Proportion of acceptable work submitted by worker $i$ \\
        $\tilde{x}_i$ & Quantity of acceptable work by worker $i$ \\
        $\tilde{p}_i$ & Payment to worker $i$ \\
        $U_i$ & Utility of worker $i$ \\
        \multicolumn{2}{l}{\textit{Hyperparameters}} \\
        $C$ & Total expected cost for requester \\
        $c$ & Total amount of work requested \\
        $\beta_i$ & $\E \alpha_i$ \\
        $f(.), F(.)$ & Density and distribution function of bids \\
        $\delta_i = \delta_i(b_i)$ & ``Virtual welfare'' of worker $i$ see equation \eqref{eq:delta} \\
        $\bar{b}$ & Maximum possible bid \\
        $k$ & Parameter for balancing equality against efficiency in work allocation \\
        $n$ & Number of workers in the auction \\
        \hline
    \end{tabular}
\end{table}

To formalize ideas, we write the utility function from the worker's perspective as
\begin{align}
U_i&(b_i, \hat{x}^{\max}_i, \hat{x}_i) = 
\begin{cases}
\E_{\alpha_i} \tilde{p}_i - \hat{x}_i v_i & \text{if } \hat{x}_i \leq x^{\max}_i \\
0 & \text{otherwise}
\end{cases} \nonumber \\
&= I(\hat{x}_i \leq x^{\max}_i) \nonumber\\
& \qquad \left[ \beta_i \min \left( \frac{\hat{x}_i}{x_i(\hat{x}^{\max}_i, b_i)}, 1 \right) p_i(\hat{x}^{\max}_i, b_i) - \hat{x}_iv_i \right]. \label{eq:utility00}
\end{align} 
where $\beta_i = \E \alpha_i$, $I(.)$ denotes the indicator function, and
\begin{equation}
p_i(\hat{x}^{\max}_i, b_i) = b_ix_i(\hat{x}^{\max}_i, b_i) + \int_{b_i}^{\bar{b}} x_i(\hat{x}^{\max}_i, s) ds  \label{eq:MyersonPay}
\end{equation}
is the payment rule for maximum pay, which is the same as \eqref{eq:Myerson} except that $x_i$ and $p_i$ are now functions of both $\boldsymbol{b}$ and $\hat{\boldsymbol{x}}^{\max}$ and we set $\kappa=0$. In \eqref{eq:MyersonPay} and in the rest of the paper, to simplify notation, the dependence of $x_i$ and $p_i$ on some of the parameters, especially $\boldsymbol{b}_{-i}$ and $\hat{\boldsymbol{x}}^{\max}_{-i}$, are often left implicit. 

In \eqref{eq:utility00}, we assume that the utility drops to zero when the work $\hat{x}_i$ exceeds $x^{\max}_i$. We note that this is for the simplicity of presentation, and that the form of $U_i$ can in fact be much more flexible. In Appendix \ref{sec:relax}, we show that the form of $U_i$ can be 
\begin{equation}
    U_i(b_i, \hat{x}^{\max}_i, \hat{x}_i) = \omega_i(\hat{x}_i)(\E_{\alpha_i} \tilde{p}_i - \hat{x}_i v_i) \label{eq:trueutility}
\end{equation}
where any function of $\omega_i(\hat{x}_i)$ satisfying 
\begin{equation}
    \omega_i(\hat{x}_i) 
    \begin{cases}
        = 1 & \text{if } \hat{x}_i \leq x^{\max}_i \\
        < \frac{x^{\max}_i}{\hat{x}_i} & \text{if } \hat{x}_i > x^{\max}_i
    \end{cases} \label{eq:omega}
\end{equation}
will give us the same conclusions of this paper. 

\subsection{Strategyproofness and individual rationality}
We now show that if the allocation function $x_i(b_i, \hat{x}_i)$ is non-increasing on $b_i$, then the two-stage reverse auction mechanism with utility function \eqref{eq:utility00} and maximum payment function \eqref{eq:MyersonPay} is strategyproof in that there exists a dominant strategy for all workers to bid and to complete work (Theorem \ref{theorem1}). Furthermore, if a worker plays by the dominant strategy, then the mechanism is individual rational for him (Theorem \ref{theorem2}). 
To prove these two results, we first introduce the following three Lemmas. 
\begin{restatable}{Lemma}{Lemmaone} \label{Lemma1}
    Suppose a worker completes all the work allocated to him, i.e. $\hat{x}_i = x_i \leq x^{\max}_i$. If $x_i(b_i)$ is a non-increasing function of $b_i$, a dominant strategy is to bid $b_i = v_i / \beta_i$, where $\beta_i = \E\alpha_i$. 
\end{restatable}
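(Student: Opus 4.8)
The plan is to reduce the statement to the classical Myerson monotonicity argument. First I would substitute the two hypotheses $\hat{x}_i = x_i$ and $x_i \leq x^{\max}_i$ directly into the utility \eqref{eq:utility00}: the indicator $I(\hat{x}_i \leq x^{\max}_i)$ equals $1$, and the ratio $\min(\hat{x}_i/x_i, 1)$ becomes $\min(1,1)=1$, so the utility collapses to $U_i = \beta_i p_i(b_i) - x_i(b_i) v_i$, now a function of the single free variable $b_i$ (with $\hat{x}^{\max}_i$ and $\boldsymbol{b}_{-i}$ held fixed and written implicitly). Substituting the payment rule \eqref{eq:MyersonPay} and factoring out $\beta_i > 0$ gives $U_i(b_i) = \beta_i \left[ x_i(b_i)(b_i - v_i/\beta_i) + \int_{b_i}^{\bar{b}} x_i(s)\,ds \right]$, in which the candidate optimum $b_i = v_i/\beta_i$ appears naturally as the value that kills the first term.

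Writing $c = v_i/\beta_i$ and $h(b) = x_i(b)(b-c) + \int_b^{\bar{b}} x_i(s)\,ds$, it then suffices to show $h(c) \geq h(b)$ for every admissible $b$, and I would argue by cases on the sign of $b - c$. For $b > c$ the difference telescopes to $h(c) - h(b) = \int_c^b x_i(s)\,ds - x_i(b)(b - c)$; since $x_i$ is non-increasing, $x_i(s) \geq x_i(b)$ throughout $[c,b]$, so the integral dominates the rectangle of height $x_i(b)$ and the difference is nonnegative. For $b < c$ the analogous computation gives $h(c) - h(b) = x_i(b)(c - b) - \int_b^c x_i(s)\,ds$, and now $x_i(s) \leq x_i(b)$ on $[b,c]$ makes the rectangle dominate the integral, again yielding a nonnegative difference. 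Together these establish that $b_i = v_i/\beta_i$ maximizes $U_i$, i.e. it is a dominant strategy.

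The one point requiring care — and the reason the proof is an inequality comparison rather than a first-order condition — is that $x_i$ is only assumed to be non-increasing, not differentiable or even continuous, so $U_i$ need not be smooth and I cannot simply set a derivative to zero. The monotonicity hypothesis is exactly what converts each case into an elementary ``area under a monotone curve versus area of a rectangle'' comparison, so the entire argument rests on that single structural assumption. I would also flag explicitly that $\hat{x}^{\max}_i$ is frozen throughout this lemma: optimality of the joint report $(b_i, \hat{x}^{\max}_i)$ and of the second-stage work $\hat{x}_i$ are separate matters handled in the remaining lemmas, whereas here I only optimize over the price component $b_i$ under the assumption that the worker delivers exactly the allocated quantity.
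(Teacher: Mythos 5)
Your proof is correct and takes essentially the same route as the paper's: both reduce the utility under the hypotheses to the Myerson form $\beta_i\left[x_i(b_i)(b_i - v_i/\beta_i) + \int_{b_i}^{\bar{b}} x_i(s)\,ds\right]$ and then show $b_i = v_i/\beta_i$ is optimal by the two-case comparison of the integral against the rectangle $x_i(b_i)\,|b_i - v_i/\beta_i|$, using only the monotonicity of $x_i$.
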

\begin{proof}
    See Appendix \ref{sec:Lemma1}.
\end{proof}
\begin{restatable}{Lemma}{Lemmatwo}  \label{Lemma2}
    Given $b_i = v_i/\beta_i$. For any $x_i$, a dominant strategy is to play $\hat{x}_i=\min(x_i, x^{\max}_i)$. 
\end{restatable}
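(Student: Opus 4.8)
The plan is to reduce the problem to maximizing a single scalar piecewise-linear function of $\hat{x}_i$ and to read off its maximizer directly. Since $b_i = v_i/\beta_i$ gives $v_i = \beta_i b_i$, substituting into \eqref{eq:utility00} and factoring out the positive constant $\beta_i$ shows that maximizing $U_i$ over $\hat{x}_i$ is equivalent to maximizing
\begin{equation*}
    V_i(\hat{x}_i) = I(\hat{x}_i \leq x^{\max}_i)\left[\min\left(\frac{\hat{x}_i}{x_i}, 1\right)p_i - \hat{x}_i b_i\right],
\end{equation*}
where $x_i$ and $p_i$ are the quantities fixed by the first-stage bid $(b_i, \hat{x}^{\max}_i)$ and hence constant in $\hat{x}_i$. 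First I would record the one structural fact I need about $p_i$: from the payment rule \eqref{eq:MyersonPay}, $p_i/x_i - b_i = \frac{1}{x_i}\int_{b_i}^{\bar{b}} x_i(s)\,ds \geq 0$, because the allocation function is non-negative. This inequality is what makes submitting work worthwhile rather than withholding it.

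Next I would partition the range of $\hat{x}_i$ into three regimes and examine the slope of $V_i$ on each. On the under-fulfilment regime $0 \leq \hat{x}_i \leq \min(x_i, x^{\max}_i)$ the minimum evaluates to $\hat{x}_i/x_i$, so $V_i = \hat{x}_i(p_i/x_i - b_i)$ is linear with non-negative slope by the fact above; thus $V_i$ is non-decreasing and is pushed up by increasing $\hat{x}_i$. On the over-fulfilment regime $x_i < \hat{x}_i \leq x^{\max}_i$ (non-empty only when $x_i < x^{\max}_i$) the minimum saturates at $1$, giving $V_i = p_i - \hat{x}_i b_i$, which is non-increasing. Finally, for $\hat{x}_i > x^{\max}_i$ the indicator vanishes and $V_i = 0$. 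Combining the first two regimes, the interior maximum sits exactly at the kink $\hat{x}_i = \min(x_i, x^{\max}_i)$.

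It then remains to rule out the degenerate regime where $V_i = 0$. Evaluating $V_i$ at the candidate $\hat{x}_i = \min(x_i, x^{\max}_i)$ yields $\frac{\min(x_i, x^{\max}_i)}{x_i}\int_{b_i}^{\bar{b}} x_i(s)\,ds \geq 0$ in both of the cases $x_i \leq x^{\max}_i$ and $x_i > x^{\max}_i$, so the candidate attains at least the value $0$ obtained by over-submitting, completing the argument that $\hat{x}_i = \min(x_i, x^{\max}_i)$ is dominant for every $x_i$. I expect the only delicate point to be the bookkeeping across the two sub-cases $x_i \leq x^{\max}_i$ and $x_i > x^{\max}_i$ when locating the kink relative to the feasibility bound $x^{\max}_i$; the payment-rule inequality is the one genuinely load-bearing step, and the remainder is routine piecewise-linear analysis.
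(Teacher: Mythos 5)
Your proposal is correct and follows essentially the same route as the paper's proof: both reduce the problem to a piecewise-linear function of $\hat{x}_i$ with three regimes, show the slope is non-negative on $[0,\min(x_i,x^{\max}_i)]$ and non-positive beyond, and conclude the maximizer is the kink. The only cosmetic difference is that you obtain the key inequality $p_i/x_i - b_i \geq 0$ directly from the integral in the payment rule \eqref{eq:MyersonPay}, whereas the paper invokes the equivalent fact $\beta_i p_i - v_i x_i \geq 0$ established in the computation of \eqref{eq:utility0}.
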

\begin{proof}
    See Appendix \ref{sec:Lemma2}.
\end{proof}

\begin{restatable}{Lemma}{Lemmathree}  \label{Lemma3}
    Given $(b_i, \hat{x}_i) = (v_i/\beta_i, \min(x_i, x^{\max}_i))$, $\hat{x}^{\max}_i = x^{\max}_i$ is a dominant strategy. 
\end{restatable}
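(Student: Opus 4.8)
The plan is to treat $b_i=v_i/\beta_i$ and $\hat{x}_i=\min(x_i,x^{\max}_i)$ as already fixed, so that the indicator $I(\hat{x}_i\le x^{\max}_i)$ in \eqref{eq:utility00} is identically $1$, and to express the worker's utility as a function of the single remaining choice variable $\hat{x}^{\max}_i$. Substituting $v_i=\beta_i b_i$ and the maximum-pay rule \eqref{eq:MyersonPay} into \eqref{eq:utility00}, the term $b_i x_i$ cancels and the utility collapses into a clean ``integral of the allocation'' form. In the regime where the allocation does not overshoot the true limit, i.e.\ $x_i(\hat{x}^{\max}_i,b_i)\le x^{\max}_i$, one has $\hat{x}_i=x_i$ and
\[
U_i=\beta_i\int_{b_i}^{\bar{b}} x_i(\hat{x}^{\max}_i,s)\,ds,
\]
whereas in the overshoot regime $x_i(\hat{x}^{\max}_i,b_i)>x^{\max}_i$ one has $\hat{x}_i=x^{\max}_i$ and the fractional factor survives,
\[
U_i=\beta_i\,\frac{x^{\max}_i}{x_i(\hat{x}^{\max}_i,b_i)}\int_{b_i}^{\bar{b}} x_i(\hat{x}^{\max}_i,s)\,ds.
\]
Truthful reporting $\hat{x}^{\max}_i=x^{\max}_i$ always lands in the first regime because the allocation respects the reported cap, so the benchmark is $U_i^{\ast}=\beta_i\int_{b_i}^{\bar{b}} x_i(x^{\max}_i,s)\,ds$, and the goal is to show no report beats $U_i^{\ast}$.

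The structural input I would isolate first is how the allocation responds to the reported maximum. Since the platform's cost-minimizing allocation never assigns more than the reported limit and is monotone in the work it is permitted to give worker $i$, it admits the truncated form $x_i(\hat{x}^{\max}_i,s)=\min(\chi_i(s),\hat{x}^{\max}_i)$, where $\chi_i(s)$ is the allocation worker $i$ would receive under an unbinding cap; $\chi_i$ does not depend on $\hat{x}^{\max}_i$ and is non-increasing in $s$ by the standing assumption that $x_i$ is non-increasing in the bid. This one identity supplies everything downstream: monotonicity of $x_i$ in $\hat{x}^{\max}_i$, the cap being respected, and the truncation relation $x_i(x^{\max}_i,s)=\min\big(x_i(\hat{x}^{\max}_i,s),x^{\max}_i\big)$ for any over-report.

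I would then split on the two regimes. The no-overshoot case covers every under-report together with over-reports whose cap never binds; there, for each $s\ge b_i$ the allocation stays at or below $x^{\max}_i$, so $x_i(\hat{x}^{\max}_i,s)\le\min(\chi_i(s),x^{\max}_i)=x_i(x^{\max}_i,s)$ pointwise, and integrating gives $U_i\le U_i^{\ast}$ at once. The overshoot case is the real work and the main obstacle: the multiplicative penalty $x^{\max}_i/x_i(\hat{x}^{\max}_i,b_i)<1$ shrinks the payment, and I must show this shrinkage outweighs the extra allocation an over-report buys. Writing $y(s)=x_i(\hat{x}^{\max}_i,s)$ and $m=x^{\max}_i$, and letting $s^{\ast}$ be the threshold where the non-increasing $y$ crosses $m$, the target inequality $\tfrac{m}{y(b_i)}\int_{b_i}^{\bar{b}} y\le\int_{b_i}^{\bar{b}}\min(y,m)$ splits at $s^{\ast}$: on $[b_i,s^{\ast}]$ I use monotonicity to bound $\int y\le y(b_i)(s^{\ast}-b_i)$, so the penalized mass is at most $m(s^{\ast}-b_i)=\int_{b_i}^{s^{\ast}}\min(y,m)$, while on $[s^{\ast},\bar{b}]$ the penalty factor being below $1$ yields the remaining term for free.

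Combining the two regimes shows that $\hat{x}^{\max}_i=x^{\max}_i$ maximizes $U_i$, which is the dominant-strategy claim. I expect the only delicate point to be justifying the truncation identity $x_i(\hat{x}^{\max}_i,s)=\min(\chi_i(s),\hat{x}^{\max}_i)$ for the coupled allocation (a monotone comparative-statics fact that relies on the cost-minimizing allocation merely redistributing the excess among the other workers when the cap binds); once that is in place, the no-overshoot case is a one-line pointwise comparison and the overshoot case reduces entirely to the threshold estimate above.
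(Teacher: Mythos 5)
Your proof is correct, and for the hard case (over-reporting with a binding true cap) it takes a genuinely different route from the paper. Both arguments rest on the same structural fact — the paper's equation \eqref{eq:xbehaviour}, which is exactly your truncation identity $x_i(\hat{x}^{\max}_i,s)=\min(\chi_i(s),\hat{x}^{\max}_i)$ with $\chi_i=x^{\crit}_i$; the paper asserts this with no more justification than you give, so you are on equal footing there. Your reduction of the utility to $\beta_i\frac{x^{\max}_i}{x_i(\hat{x}^{\max}_i,b_i)}\int_{b_i}^{\bar b}x_i(\hat{x}^{\max}_i,s)\,ds$ checks out (the $b_ix_i$ term does cancel against $\hat{x}_iv_i$ under $b_i=v_i/\beta_i$), and your under-report case is a one-line pointwise domination of integrands, which is cleaner than the paper's Case 2, where the sign of $\partial U_i/\partial\hat{x}^{\max}_i$ is computed. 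For the over-report case the paper instead inverts the Myerson integral to write $p_i(\hat{x}^{\max}_i)=\int_0^{\hat{x}^{\max}_i}\min(\bar b,b_i(x))\,dx$, rescales to the unit interval, and invokes a single-crossing argument between two payment densities $r$ and $r'$; you work directly in bid space, splitting $[b_i,\bar b]$ at the point $s^\ast$ where $y$ crosses $m=x^{\max}_i$ and using monotonicity of $y$ on $[b_i,s^\ast]$ together with the penalty factor being below $1$ on $[s^\ast,\bar b]$. Your version is more elementary and avoids the change of variables, at the cost of being tied to the explicit integral form of the utility; the paper's single-crossing formulation is slightly more abstract but ultimately encodes the same monotone-rearrangement idea. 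Both decompositions cover all of the paper's three cases (your no-overshoot regime absorbs the paper's Cases 1 and 2 when $x^{\crit}_i(b_i)\le x^{\max}_i$, and your overshoot regime absorbs Case 3 and the remainder of Case 1), so nothing is missing.
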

\begin{proof}
    See Appendix \ref{sec:Lemma3}.
\end{proof}

These Lemmas together imply the following:

\begin{restatable}{theorem}{theo} \label{theorem1}
    Supposing $x_i(b_i, \hat{x}^{\max}_i) \equiv x_i(b_i, \boldsymbol{b}_{-i}, \hat{x}^{\max}_i, \hat{\boldsymbol{x}}^{\max}_{-i})$ to be a non-increasing allocation function of $b_i$ for all $\hat{\boldsymbol{x}}^{\max}$ and $\boldsymbol{b}_{-i}$, $(b_i, \hat{x}^{\max}_i, \hat{x}_i) = (v_i/\beta_i, x^{\max}_i, x_i)$ is the only dominant strategy given the utility function \eqref{eq:utility00}. 
\end{restatable}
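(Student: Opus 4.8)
The plan is to treat the three Lemmas as establishing optimality of one coordinate of the bid tuple at a time, and then to assemble them into a single dominant strategy while respecting the temporal structure of the mechanism. The subtlety is that the worker's three decisions are not made simultaneously: $b_i$ and $\hat{x}^{\max}_i$ are committed in the first stage, whereas $\hat{x}_i$ is chosen in the second stage after the platform has revealed $x_i$ and $p_i$. I would therefore argue by backward induction, solving the innermost (second-stage) decision first and then the first-stage bid, so that each Lemma is invoked only under hypotheses that have already been secured. I would begin by fixing an arbitrary opponent profile $(\boldsymbol{b}_{-i}, \hat{\boldsymbol{x}}^{\max}_{-i})$, since a dominant strategy must be optimal for every such profile; because the hypotheses of all three Lemmas are stated uniformly in these implicit arguments, everything below then holds profile-by-profile.

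For the second-stage decision, Lemma \ref{Lemma2} shows that once $b_i = v_i/\beta_i$ has been played, the unique best response is $\hat{x}_i = \min(x_i, x^{\max}_i)$, for every realized $x_i$ and hence for every first-stage report $\hat{x}^{\max}_i$. This closes the second coordinate conditional on the first and, in particular, places the worker in the regime $\hat{x}_i = x_i \le x^{\max}_i$ whenever $x_i \le x^{\max}_i$. I would note here that honest reporting $\hat{x}^{\max}_i = x^{\max}_i$, together with an allocation rule that never assigns more than the declared maximum, guarantees $x_i \le x^{\max}_i$, so that $\min(x_i, x^{\max}_i) = x_i$ and the second coordinate of the proposed triple is exactly $x_i$, matching the statement. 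Substituting this second-stage best response, the worker completes all allocated work, which is precisely the precondition of Lemma \ref{Lemma1}; invoking it (using the assumed monotonicity of $x_i(b_i)$) yields $b_i = v_i/\beta_i$ as the optimal cost bid, and Lemma \ref{Lemma3} then fixes $\hat{x}^{\max}_i = x^{\max}_i$. To obtain uniqueness I would record that each Lemma identifies its coordinate as the \emph{sole} maximizer, so that any genuine deviation in at least one coordinate is strictly dominated.

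The main obstacle is the apparent circularity between Lemma \ref{Lemma1} and Lemma \ref{Lemma2}: Lemma \ref{Lemma1} derives the optimal $b_i$ \emph{under the assumption} that the worker completes all allocated work, while Lemma \ref{Lemma2} only guarantees that completing all work is optimal \emph{once} $b_i = v_i/\beta_i$. Breaking this loop is the crux, and the real content beyond merely citing the three Lemmas. The clean way I would handle it is to verify directly that the coordinatewise optima compose into a global one, i.e. to rule out a joint deviation in $b_i$ (and possibly $\hat{x}^{\max}_i$) that, by changing the realized $(x_i, p_i)$, induces a second-stage response other than complete-all-work that is more profitable than the proposed triple. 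Since $p_i(\hat{x}^{\max}_i, b_i) \ge b_i x_i$ by \eqref{eq:MyersonPay}, the slope of $U_i$ in $\hat{x}_i$ on $[0, \min(x_i, x^{\max}_i)]$ is at least $\beta_i b_i - v_i$, so no \emph{over}bid can ever make shirking profitable; the remaining case of underbidding is controlled by the monotonicity of $x_i(b_i)$ together with the comparison already supplied by Lemma \ref{Lemma1}. Assembling these observations would show that the proposed triple weakly dominates every alternative, with strict inequality for every deviation off the triple, which gives both dominance and uniqueness.
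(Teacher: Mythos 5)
Your architecture is essentially the paper's: both arguments recognize that the three Lemmas cannot simply be chained (Lemma \ref{Lemma1} presupposes full completion, Lemma \ref{Lemma2} presupposes $b_i=v_i/\beta_i$) and both break the loop by ruling out joint deviations, and your observation that the slope of $U_i$ in $\hat{x}_i$ is at least $\beta_i b_i - v_i$, so that an overbid can never make shirking profitable, is exactly the paper's first step (it also notes that the only alternative second-stage optimum is $\hat{x}_i=0$, which yields $U_i=0$ and is therefore dominated).

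The gap is in the underbid branch. After your reductions, the deviation still standing is $b_i < v_i/\beta_i$ combined with an over-report $\hat{x}^{\max}_i > x^{\max}_i$ large enough that $x_i(b_i,\hat{x}^{\max}_i) > x^{\max}_i$, followed by the second-stage response $\hat{x}_i = x^{\max}_i < x_i$. In that regime the utility is not the full-completion expression $\beta_i p_i(b_i) - v_i x_i(b_i)$ to which Lemma \ref{Lemma1}'s comparison applies, but
\[
U_i(b_i) \;=\; \beta_i\, x^{\max}_i\, \frac{p_i(b_i)}{x_i(b_i)} \;-\; x^{\max}_i v_i ,
\]
and monotonicity of $x_i(b_i)$ alone does not control it: lowering $b_i$ raises both $p_i$ and $x_i$, and what must be shown is that the \emph{ratio} $p_i(b_i)/x_i(b_i)$ does not increase as $b_i$ decreases. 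The paper closes this with a separate computation using the Myerson form \eqref{eq:MyersonPay}: writing $p_i/x_i = b_i + \int_{b_i}^{\bar b} x_i(s)\,ds \,/\, x_i(b_i)$, its derivative in $b_i$ equals $-x_i'(b_i)\int_{b_i}^{\bar b} x_i(s)\,ds / x_i(b_i)^2 \geq 0$, so over $0 < b_i \leq v_i/\beta_i$ this utility is maximized at $b_i = v_i/\beta_i$. Without some such step your argument does not exclude this joint deviation, so the proof is incomplete as sketched. A secondary point: your uniqueness claim that ``each Lemma identifies its coordinate as the sole maximizer'' is not what the Lemmas deliver --- in Lemma \ref{Lemma3}, Case 1, every $\hat{x}^{\max}_i \geq x^{\crit}_i(b_i)$ yields the same utility at a fixed opponent profile; uniqueness comes instead from the fact that $x^{\crit}_i(b_i)$ depends on the unknown profile, so only $\hat{x}^{\max}_i = x^{\max}_i$ is optimal at every profile simultaneously.
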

\begin{proof}
    See Appendix \ref{sec:theorem1}.
\end{proof}
Theorem \ref{theorem1} shows that providing the allocation $x_i(b_i, \hat{x}^{\max}_i)$ is non-increasing on $b_i$, the two-stage auction mechanism is basically truthful, except that the worker needs to bid $v_i/\beta_i$ instead of $v_i$ to account for the fact that not all of his work would be satisfactory. In this work we assume that $\beta_i$, the expected proportion of satisfactory work, is fixed and known to the worker. However, in practice, the workers may be able to control $v_i$ and $\beta_i$ to some extent. For example, a worker putting in more effort will generally lead to a higher $v_i$ as well as higher $\beta_i$. The worker's strategy would be to find $v_i$ and $\beta_i$ that minimize the ratio $v_i/\beta_i$, since the proof of the following theorem implies that utility is a decreasing function of $v_i/\beta_i$. 

\begin{theorem} \label{theorem2}
    The mechanism is individual rational, i.e. $U_i(v_i/\beta_i, \hat{x}^{\max}_i, x_i) \geq 0$, if the worker plays by the dominant strategy on $b_i$ and $\hat{x}_i$. 
    \begin{proof}
        \begin{align}
        U_i&(v_i/\beta_i, \hat{x}^{\max}_i, x_i) \nonumber \\
        &=\beta_i p_i(\hat{x}^{\max}_i, v_i/\beta_i) - x_i(\hat{x}^{\max}_i, v_i/\beta_i)v_i \nonumber \\
        &=\beta_i \left[ \frac{v_i}{\beta_i} x_i(\hat{x}^{\max}_i, v_i/\beta_i) + \int_{v_i/\beta_i}^{\bar{b}_i} x_i(\hat{x}^{\max}_i, s) ds \right] - \nonumber \\
        & \quad \ x_i(\hat{x}^{\max}_i, v_i/\beta_i)v_i \nonumber \\ 
        &=\beta_i \int_{v_i/\beta_i}^{\bar{b}_i} x_i(\hat{x}^{\max}_i, s) ds \geq 0 \label{eq:theorem2}
        \end{align}
    \end{proof}
\end{theorem}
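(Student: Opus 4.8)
The plan is to evaluate the utility function \eqref{eq:utility00} directly at the dominant strategy and then exploit the algebraic structure of the maximum-payment rule \eqref{eq:MyersonPay}. First I would note that, under the dominant strategy, the worker completes exactly the allocated work, so $\hat{x}_i = x_i \le x^{\max}_i$. Consequently the indicator $I(\hat{x}_i \le x^{\max}_i)$ is active and the ratio $\hat{x}_i/x_i = 1$, so the $\min(\cdot, 1)$ term collapses to $1$. The utility therefore reduces to $U_i(v_i/\beta_i, \hat{x}^{\max}_i, x_i) = \beta_i\, p_i(\hat{x}^{\max}_i, v_i/\beta_i) - x_i(\hat{x}^{\max}_i, v_i/\beta_i)\, v_i$.

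The key step is to substitute the payment rule \eqref{eq:MyersonPay} evaluated at the scaled bid $b_i = v_i/\beta_i$. Expanding $p_i = (v_i/\beta_i)\, x_i + \int_{v_i/\beta_i}^{\bar{b}_i} x_i(\hat{x}^{\max}_i, s)\, ds$ and multiplying through by $\beta_i$ produces a leading term $v_i\, x_i$ that cancels exactly against the $-x_i v_i$ term carried over from the utility. This cancellation is precisely where the choice $\kappa = 0$ in \eqref{eq:MyersonPay} and the scaled bid $v_i/\beta_i$ matter: the $\beta_i$ factor multiplying the expected payment and the $1/\beta_i$ factor inside the bid conspire so that the worker's private cost is reimbursed exactly at the margin, leaving only the integral term.

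What remains is $U_i = \beta_i \int_{v_i/\beta_i}^{\bar{b}_i} x_i(\hat{x}^{\max}_i, s)\, ds$, and I would conclude non-negativity from two elementary observations: $\beta_i = \E\alpha_i \ge 0$ since it is an expected proportion, and the integrand $x_i(\hat{x}^{\max}_i, s)$ is an allocation, hence non-negative, integrated over a range whose lower limit $v_i/\beta_i$ does not exceed the upper limit $\bar{b}_i$. There is no substantive obstacle here beyond careful bookkeeping; the only point worth checking is that the integration limits are correctly ordered, which holds because the scaled bid lies within the admissible range of bids (a worker bidding at or above $\bar{b}_i$ simply receives no allocation, so the integral is non-negative in every case). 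The result is thus a direct corollary of the payment rule's construction rather than requiring any genuinely new argument.
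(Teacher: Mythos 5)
Your proposal is correct and follows essentially the same route as the paper's own proof: evaluate $U_i$ at the dominant strategy, substitute the payment rule \eqref{eq:MyersonPay} at the scaled bid $v_i/\beta_i$, cancel the $v_i x_i$ terms, and conclude non-negativity from the remaining integral of a non-negative allocation. Your added remarks on why the cancellation occurs and on the ordering of the integration limits are sound but not needed beyond what the paper's computation already establishes.
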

Theorem \ref{theorem2} ensures that the workers cannot lose money or have negative utility simply by participating in the crowdsourcing platform. 

\section{An allocation mechanism to increase worker participation} \label{sec:analysis}
Our second main contribution in this paper is the proposal of the following allocation mechanism for balancing \emph{cost efficiency} against \emph{allocation equality}. While cost efficiency is obviously important to the requester, we show that increasing equality in allocation can increase long term participation and diversity from the workers. 

Recall that in the standard optimal auction, allocation is implemented by minimizing the linear function $\sum_i \delta_ix_i$, which is shown by \cite{Myerson1981} and \cite{Koutsopoulos2013} to equal the total expected cost. Here, our proposed allocation rule is the minimization of the following quadratic program. 
\begin{align}
\text{minimize} \quad & \sum_i \delta_i^k x_i^2 \nonumber \\
\text{subject to} \quad & 0 \leq x_i \leq \hat{x}^{\max}_i, \quad \forall i \nonumber \\
& \sum_i x_i = c \label{eq:problem}
\end{align}
where $c$ is the amount of work required by the requester and can be set in relation to the number of workers available ($n$). Evidently, a larger $c$ leads to more work being done, but at the same time reduces competition leading to less value for money. 

Note that not all of the inequality constraints in \eqref{eq:problem} are necessarily \emph{tight}, where we define a \emph{tight} constraint to be one for which the solution to \eqref{eq:problem} would have been different if the constraint was not present. In Appendix \ref{sec:Lemma32}, we show that so long as $\delta_i > 0$ for all $i$, none of the constraints $x_i \geq 0$ are tight, and can be safely ignored if we assume all of our bids $b_i$ are positive. Because there are precisely $n$ remaining inequality constraints for $n$ workers, we denote the set of tight constraints corresponding to $x_i \leq \hat{x}^{\max}_i$ by $\mathcal{A}^* \subseteq \{1,\ldots,n \}$.\footnote{Since there is one constraint for each Worker $i$, with a slight abuse of notation, we index a constraint in $\mathcal{A}^*$ by $i$ in this article.} Moreover, since our optimization Problem \eqref{eq:problem} is a strictly convex quadratic program, we can also give a mathematically precise definition of a tight constraint as follows. 
\begin{definition}
    A constraint $x_i \leq \hat{x}^{\max}_i$ is \emph{tight}, i.e. $i \in \mathcal{A}^*$, if and only if its Lagrange multiplier is non-zero in the solution of the Lagrangian dual of \eqref{eq:problem}. 
\end{definition}


With this definition, the following Lemma gives the standard Karush-Kuhn-Tucker (KKT) solution to \eqref{eq:problem}. 
\begin{restatable}{Lemma}{Lemmathreetwo}  \label{Lemma32}
	Denoting by ${\mathcal{A}^*}^- = \{1,\ldots, n\} \setminus \mathcal{A}^*$ the complement of $\mathcal{A}^*$, the solution set of \eqref{eq:problem} is given by
	\begin{equation}
	x_i = \begin{cases}
	\hat{x}^{\max}_i & \text{if } i \in \mathcal{A}^* \\
	c'(\mathcal{A}^*) \delta_i^{-k}/\sum_{j \in {\mathcal{A}^*}^-} \delta_j^{-k} & \text{otherwise} 
	\end{cases}
	\end{equation}
	where $c'(\mathcal{A}^*) = c - \sum_{j \in \mathcal{A}^*} \hat{x}^{\max}_i$. 
\end{restatable}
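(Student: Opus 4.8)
The plan is to treat \eqref{eq:problem} as a strictly convex quadratic program and read off the minimizer from its Karush--Kuhn--Tucker (KKT) conditions. Because every $\delta_i^k > 0$ the objective $\sum_i \delta_i^k x_i^2$ is strictly convex, the feasible set is a convex polytope cut out by affine constraints, and consequently the KKT system is both necessary and sufficient while the minimizer is unique. First I would form the Lagrangian
\[
L = \sum_i \delta_i^k x_i^2 + \lambda\left(\sum_i x_i - c\right) + \sum_i \mu_i\,(x_i - \hat{x}^{\max}_i),
\]
in which, invoking the remark preceding the statement that $\delta_i > 0$ renders every lower bound $x_i \geq 0$ non-binding, I omit the multipliers attached to those constraints. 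Stationarity then gives $2\delta_i^k x_i + \lambda + \mu_i = 0$ for each $i$, to be combined with dual feasibility $\mu_i \geq 0$, primal feasibility $\sum_i x_i = c$, and complementary slackness $\mu_i(x_i - \hat{x}^{\max}_i) = 0$.

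Next I would partition $\{1,\dots,n\}$ using the Definition above. For $i \in \mathcal{A}^*$ the multiplier satisfies $\mu_i \neq 0$ by definition, so complementary slackness forces $x_i = \hat{x}^{\max}_i$, giving the first branch of the claim. For $i \in {\mathcal{A}^*}^-$ we instead have $\mu_i = 0$, and stationarity reduces to $x_i = -\tfrac{\lambda}{2}\,\delta_i^{-k}$. To eliminate the unknown $\lambda$ I would substitute both expressions into the equality constraint and separate the fixed from the free coordinates, obtaining
\[
-\frac{\lambda}{2}\sum_{j \in {\mathcal{A}^*}^-}\delta_j^{-k} \;=\; c - \sum_{j \in \mathcal{A}^*}\hat{x}^{\max}_j \;=\; c'(\mathcal{A}^*).
\]
Since $\delta_j > 0$ the sum $\sum_{j \in {\mathcal{A}^*}^-}\delta_j^{-k}$ is strictly positive and may be divided out, and back-substituting $-\tfrac{\lambda}{2}$ into $x_i = -\tfrac{\lambda}{2}\delta_i^{-k}$ yields exactly the second branch $x_i = c'(\mathcal{A}^*)\,\delta_i^{-k}/\sum_{j \in {\mathcal{A}^*}^-}\delta_j^{-k}$.

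The algebra here is entirely routine, so the points that need genuine care are conceptual rather than computational. The first is the justification for discarding the constraints $x_i \geq 0$, which I would base on the hypothesis $\delta_i > 0$ together with the sign of $-\lambda/2 = c'(\mathcal{A}^*)/\sum_j \delta_j^{-k}$, ensuring the free coordinates come out nonnegative. The second is correctly reading the given Definition of a \emph{tight} constraint: because $\mathcal{A}^*$ is specified as the set of indices whose multiplier is nonzero in the (unique) solution, the vector constructed above automatically satisfies every KKT relation, and strict convexity then certifies that this candidate is the global minimizer rather than merely a stationary point. I expect the dependence of the answer on the a priori unknown set $\mathcal{A}^*$ to be the main conceptual hurdle: the formula characterizes the solution only relative to $\mathcal{A}^*$, and it is the accompanying definition, not this derivation, that fixes which constraints belong to it.
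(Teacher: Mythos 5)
Your proposal is correct and follows essentially the same route as the paper: both derive the formula by writing down the KKT system of the strictly convex quadratic program with the $x_i \geq 0$ constraints dropped, identifying $\mathcal{A}^*$ via complementary slackness, and eliminating the equality multiplier using $\sum_i x_i = c$. The only minor difference is that the paper's appendix itself supplies the perturbation argument showing the constraints $x_i \geq 0$ are never tight, whereas you take that fact as given from the main text and sketch a sign argument instead.
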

\begin{proof}
	See Appendix \ref{sec:Lemma32}.
\end{proof}

Note that Lemma \ref{Lemma32} implies that the solution $\boldsymbol{x}$ and the tight constraints $\mathcal{A}^*$ are mutually dependent on one another. Thus, a heuristical approach to solving for $\boldsymbol{x}$ and $\mathcal{A}^*$, as formalized in Algorithm \ref{algorithm}, is to update them iteratively. In this algorithm, we initialize $\mathcal{A}^{(0)}$ to the empty set (line 1). In the first step ($m=1$), we solve for $\boldsymbol{x}$ given $\mathcal{A}^{(m-1)}$ (line 3). All $x_i$s in the solution which violate the constraint $x_i \leq \hat{x}^{\max}_i$ are added to $\mathcal{A}^{(m)}$ (line 5). This process is iterated until no more constraints are violated, whence we have our estimates for $\boldsymbol{x}$ and $\mathcal{A}^*$. Note that Algorithm \ref{algorithm} consists only of simple arithmatic operations and its complexity is $\leq \mathcal{O}(n^2)$. 

\begin{algorithm}[!t]
    \DontPrintSemicolon
    \SetKwRepeat{Do}{do}{while}
    \SetKwRepeat{Repeat}{repeat}{end}
    \SetKw{ForAll}{for all}
    \SetKw{ForAny}{for any}
    
    \KwInput{$\boldsymbol{\delta}, \hat{\boldsymbol{x}}^{\max}, c$}
    \KwOutput{$\boldsymbol{x}, \mathcal{A}$}
    $\mathcal{A}^{(0)} := \emptyset$; $m := 1$
    
    \Repeat{}{
        For all $i$, $x^{(m)}_i = 
        \begin{cases}
        \hat{x}^{\max}_i & \text{if } i \in \mathcal{A}^{(m-1)} \\
        c'(\mathcal{A}^{(m-1)}) \delta_i^{-k}/\sum_{j \in {\mathcal{A}^{(m-1)}}^-} \delta_j^{-k} & \text{otherwise}
        \end{cases}$
        
        \Indp where $c'(\mathcal{A}^{(m-1)}) = c - \sum_{j \in {\mathcal{A}^{(m-1)}}^-} \hat{x}^{\max}_j$. 
        
        \Indm
        \If{$x^{(m)}_i > \hat{x}^{\max}_i$ \ForAny $i$}{
            $\mathcal{A}^{(m)} = \mathcal{A}^{(m-1)} \cup \{i: x^{(m)}_i > \hat{x}^{\max}_i \}$
        }
        \Else {
            \Return{$\boldsymbol{x}=\boldsymbol{x}^{(m)}, \mathcal{A}=\mathcal{A}^{(m-1)}$}
        }
        
        $m=m+1$
        
    }
    
    \caption{Algorithm for allocating work $\boldsymbol{x}$}
    \label{algorithm}
\end{algorithm}

It can be shown that Algorithm \ref{algorithm} in fact converges to the optimum of \eqref{eq:problem}. However, before that, it is useful to further examine the properties of Problem \eqref{eq:problem}. First, note that the KKT conditions (Appendix \ref{sec:Lemma32}) imply that at the minimum: 
\begin{equation*}
\mu(\mathcal{A}^*) =  c'(\mathcal{A}^*)\left(\sum_{j \in {\mathcal{A}^*}^-} \delta_j^{-k} \right)^{-1} 
\end{equation*}
and
\begin{equation}
\lambda_i(\mathcal{A}^*) = 
\begin{cases}
\tilde{\lambda}_i(\mathcal{A}^*)  = \mu(\mathcal{A}^*) - \delta_i^{k} \hat{x}^{\max}_i > 0 & \text{iff } i \in \mathcal{A}^* \\
0 & \text{otherwise}, 
\end{cases} \label{eq:kkt2}
\end{equation}
where $\lambda_i \geq 0$ is the Lagrange multiplier for the constraint $x_i \leq \hat{x}^{\max}_i$ and $\mu$ is the (negated) Lagrange multiplier for the constraint $\sum_i x_i = c$. Note that by definition, $i \in \mathcal{A}^*$ if and only if $\tilde{\lambda}_i(\mathcal{A}^*)  = \mu(\mathcal{A}^*) - \delta_i^{k} \hat{x}^{\max}_i > 0$. Thus, denoting by $\gamma_i = \delta_i^{k} \hat{x}^{\max}_i$, if $i \in \mathcal{A}^*$, then $j \in \mathcal{A}^*$ for all $\{j : \gamma_j < \gamma_i \}$. Hence, to simplify notation, we can assume without loss of generality that $\gamma_1 \leq \gamma_2 \leq \ldots \leq \gamma_n$, and $\mathcal{A}^* \in \mathbb{H} = \{\emptyset, \{1\}, \{1,2\}, \ldots, \{1,2,\ldots,n \} \}$. To further simplify notation, we denote by $\mathcal{A}_0 = \emptyset, \mathcal{A}_1 = \{1\}, \mathcal{A}_2 = \{1, 2\}$, and so on. We now state the following two Lemmas, whose proofs are given in Appendices \ref{sec:Lemma3a} and \ref{sec:Lemma_convergence} respectively. 

\begin{restatable}{Lemma}{propositionone} \label{Lemma3a}
	$\gamma_i < \mu(\mathcal{A}_r) \implies \gamma_i < \mu(\mathcal{A}_q)$ for all $0 \leq r < q < i \leq n$. 
	\label{propositionone}
\end{restatable}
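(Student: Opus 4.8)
The plan is to understand how the dual quantity $\mu(\mathcal{A}_r)$ evolves as the index $r$ is incremented by one, and then to push the hypothesis $\gamma_i < \mu(\mathcal{A}_r)$ forward by induction on the upper index $q$. Writing $S_r = \sum_{j=r+1}^{n} \delta_j^{-k}$ so that, by \eqref{eq:kkt2}, $\mu(\mathcal{A}_r) = c'(\mathcal{A}_r)/S_r$ with $c'(\mathcal{A}_r) = c - \sum_{j=1}^{r} \hat{x}^{\max}_j$, it suffices to establish the single-step implication: if $\gamma_i < \mu(\mathcal{A}_q)$ and $q+1 < i$, then $\gamma_i < \mu(\mathcal{A}_{q+1})$. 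Chaining this from $q=r$ up to $q=i-1$ then yields the full claim, the case $r = i-1$ being vacuous.

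The crux is a sign identity for the increment $\mu(\mathcal{A}_{r+1}) - \mu(\mathcal{A}_r)$. Incrementing $r$ replaces $c'(\mathcal{A}_r)$ by $c'(\mathcal{A}_r) - \hat{x}^{\max}_{r+1}$ and $S_r$ by $S_{r+1} = S_r - \delta_{r+1}^{-k}$. Substituting $\hat{x}^{\max}_{r+1} = \gamma_{r+1}\delta_{r+1}^{-k}$ and simplifying over the common denominator $S_r S_{r+1}$, I expect to obtain
\begin{equation*}
\mu(\mathcal{A}_{r+1}) - \mu(\mathcal{A}_r) = \frac{\delta_{r+1}^{-k}\,\bigl(\mu(\mathcal{A}_r) - \gamma_{r+1}\bigr)}{S_{r+1}}.
\end{equation*}
Since $\delta_{r+1}^{-k} > 0$ and $S_{r+1} = \sum_{j=r+2}^{n}\delta_j^{-k} > 0$ throughout the relevant range (for every step used we have $r \le i-2$, hence $r+2 \le i \le n$, so $S_{r+1}$ contains at least the positive term $\delta_i^{-k}$), this shows that $\mu(\mathcal{A}_{r+1}) - \mu(\mathcal{A}_r)$ carries the same sign as $\mu(\mathcal{A}_r) - \gamma_{r+1}$. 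In particular, whenever $\gamma_{r+1} < \mu(\mathcal{A}_r)$ the sequence strictly increases at that step.

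With this in hand the induction is immediate. Assume $\gamma_i < \mu(\mathcal{A}_q)$ for some $q$ with $r \le q \le i-2$. Because the $\gamma_j$ are sorted nondecreasingly and $q+1 \le i-1 < i$, we have $\gamma_{q+1} \le \gamma_i < \mu(\mathcal{A}_q)$, so $\mu(\mathcal{A}_q) - \gamma_{q+1} > 0$. The sign identity then gives $\mu(\mathcal{A}_{q+1}) > \mu(\mathcal{A}_q) > \gamma_i$, completing the inductive step; the base case $q=r$ is exactly the hypothesis $\gamma_i < \mu(\mathcal{A}_r)$. Thus $\gamma_i < \mu(\mathcal{A}_q)$ for every $q$ with $r < q < i$.

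I expect the main obstacle to be purely the bookkeeping in deriving the increment identity cleanly, namely keeping straight which of the two running sums (the one over $\hat{x}^{\max}_j$ and the one over $\delta_j^{-k}$) each term belongs to, and confirming that the denominator $S_{r+1}$ remains strictly positive so that the sign argument is legitimate. Once the identity is written in the factored form above, the monotonicity consequence and the sorted-order chaining are routine; no separate treatment of the lower-bound constraints is required, since $\delta_j > 0$ already guarantees that every $\delta_j^{-k}$, and hence every partial sum $S_r$, is positive.
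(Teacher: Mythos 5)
Your proof is correct, and it shares the paper's overall skeleton (reduce to the single step $\mathcal{A}_q \to \mathcal{A}_{q+1}$, invoke the sorted order $\gamma_{q+1}\le\gamma_i$, then chain by induction), but the way you handle the single step is genuinely different. The paper assumes $\gamma_i \ge \mu(\mathcal{A}_{r+1})$, normalizes via a scale-invariance argument so that $c'(\mathcal{A}_r)=\sum_{j\ge r+1}\delta_j^{-k}$, and derives a contradiction by combining the resulting inequalities. You instead compute the increment exactly, $\mu(\mathcal{A}_{r+1})-\mu(\mathcal{A}_r)=\delta_{r+1}^{-k}\bigl(\mu(\mathcal{A}_r)-\gamma_{r+1}\bigr)/S_{r+1}$, an identity I have verified, and read the conclusion off its sign. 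Your route buys more: it avoids the rescaling step entirely and makes explicit that $\mu(\mathcal{A}_q)$ is \emph{strictly increasing} in $q$ along the whole chain where the hypothesis propagates, a monotonicity fact the paper's contradiction argument leaves implicit (and which is what actually drives the convergence argument for Algorithm \ref{algorithm}). You also correctly attend to the two points where the paper is terse, namely that $S_{q+1}>0$ for every step used (since $q+2\le i\le n$) and that the case $r=i-1$ is vacuous. No gaps.
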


\begin{restatable}{Lemma}{Lemmaconv}  \label{Lemma_convergence}
    Algorithm \ref{algorithm} converges to the solution of \eqref{eq:problem}. 
\end{restatable}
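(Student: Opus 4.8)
The plan is to exploit the two structural facts already in place: that the optimal active set is a prefix $\mathcal{A}^* \in \mathbb{H}$ characterized through the KKT system \eqref{eq:kkt2}, and that the quadratic program \eqref{eq:problem} is strictly convex, so that its KKT point is unique and globally optimal. My strategy is therefore to show that Algorithm \ref{algorithm} terminates at a prefix $\mathcal{A}_{r^*}$ whose associated $\boldsymbol{x}^{(m)}$, together with the multipliers $\mu(\mathcal{A}_{r^*})$ and $\lambda_i = \mu(\mathcal{A}_{r^*}) - \gamma_i$, satisfies every KKT condition; uniqueness then forces $\boldsymbol{x}^{(m)}$ to be the minimizer and $\mathcal{A}_{r^*} = \mathcal{A}^*$.

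First I would establish a prefix invariant and termination. Since $\gamma_1 \le \cdots \le \gamma_n$, the set of indices violating their bound at a prefix $\mathcal{A}_r$, namely $\{i > r : \gamma_i < \mu(\mathcal{A}_r)\}$ (using $x_i^{(m)} = \mu(\mathcal{A}_r)\delta_i^{-k}$ off the active set and $\gamma_i = \delta_i^k \hat{x}^{\max}_i$), is itself an initial segment $\{r+1,\ldots,s\}$; hence every iterate $\mathcal{A}^{(m)}$ is again a prefix $\mathcal{A}_{r_m}$ and $r_m$ strictly increases while the algorithm runs, giving termination in at most $n$ steps. At the terminal set $\mathcal{A}_{r^*}$ the exit condition gives $\gamma_i \ge \mu(\mathcal{A}_{r^*})$ for every $i > r^*$, so $\boldsymbol{x}^{(m)}$ is primal feasible (the sum constraint holds by the construction of $c'$, the upper bounds hold with equality on $\mathcal{A}_{r^*}$ and by the exit test off it, and nonnegativity follows from $\delta_i>0$ and $\mu>0$), while complementary slackness holds for the inactive constraints.

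The substance of the proof is dual feasibility, i.e. $\lambda_i = \mu(\mathcal{A}_{r^*}) - \gamma_i > 0$ for every $i \in \mathcal{A}_{r^*}$. The engine is a single-step monotonicity statement: if index $r+1$ is violated at $\mathcal{A}_r$, i.e. $\gamma_{r+1} < \mu(\mathcal{A}_r)$, then adding it strictly raises the dual value, $\mu(\mathcal{A}_{r+1}) > \mu(\mathcal{A}_r)$. This follows from the identity expressing $\mu(\mathcal{A}_r)$ as the convex combination of $\mu(\mathcal{A}_{r+1})$ and $\gamma_{r+1}$ with weights proportional to $\sum_{j>r+1}\delta_j^{-k}$ and $\delta_{r+1}^{-k}$: a mean that lies strictly above the endpoint $\gamma_{r+1}$ forces its other endpoint $\mu(\mathcal{A}_{r+1})$ to lie strictly above $\mu(\mathcal{A}_r)$. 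Combining this with Lemma \ref{Lemma3a}, which propagates a violation $\gamma_i < \mu(\mathcal{A}_r)$ forward to every intermediate prefix, I would show that each index added in a block stays violated as the block is absorbed one index at a time, so that $\mu(\mathcal{A}_{r_0}) < \mu(\mathcal{A}_{r_1}) < \cdots < \mu(\mathcal{A}_{r^*})$ along the trajectory. Consequently, if $i$ was admitted at $\mathcal{A}_{r_t}$ (so $\gamma_i < \mu(\mathcal{A}_{r_t})$), then $\gamma_i < \mu(\mathcal{A}_{r_t}) \le \mu(\mathcal{A}_{r^*})$, which is exactly $\lambda_i > 0$.

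Having verified primal feasibility, complementary slackness, and dual feasibility, the pair consisting of $\boldsymbol{x}^{(m)}$ and the multipliers $(\mu,\boldsymbol{\lambda})$ satisfies the KKT system \eqref{eq:kkt2} that Lemma \ref{Lemma32} derives for \eqref{eq:problem}; by strict convexity this system has a unique solution, so $\boldsymbol{x}^{(m)}$ is the minimizer. I expect the main obstacle to be the dual-feasibility step, and specifically ensuring that an index never becomes \emph{stale} --- that a constraint activated early, when $\mu$ was still small, does not end up with $\gamma_i \ge \mu(\mathcal{A}_{r^*})$ once the active set has grown. This is precisely what the monotonicity of $\mu$ along the trajectory rules out, and getting the direction of the convex-combination inequality right, together with checking (via Lemma \ref{Lemma3a}) that the remaining block indices stay violated after each single addition, is the delicate part of the argument.
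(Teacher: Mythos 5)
Your proof is correct, and while it shares the paper's skeleton (the prefix invariant $\mathcal{A}^{(m)}\in\mathbb{H}$, termination in at most $n$ steps, Lemma \ref{Lemma3a}, and strict convexity to upgrade to global optimality), it closes the argument by a genuinely different and tighter route. The paper shows that every constraint the algorithm activates lies in $\mathcal{A}^*$ and then asserts that the terminal point ``must be a local minimum''; it never verifies that the terminal active set is all of $\mathcal{A}^*$, nor does it confront the fact that Lemma \ref{Lemma3a} only propagates a violation to prefixes $\mathcal{A}_q$ with $q<i$, whereas membership $i\in\mathcal{A}^*=\mathcal{A}_{r^*}$ requires $\gamma_i<\mu(\mathcal{A}_{r^*})$ with $r^*\geq i$. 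You instead certify optimality directly by checking the full KKT system at termination, and the one genuinely new ingredient --- the identity expressing $\mu(\mathcal{A}_r)$ as a convex combination of $\mu(\mathcal{A}_{r+1})$ and $\gamma_{r+1}$, hence $\mu(\mathcal{A}_{r+1})>\mu(\mathcal{A}_r)$ whenever the added index is violated --- is exactly what closes the $q\geq i$ gap and yields dual feasibility $\lambda_i=\mu(\mathcal{A}_{r^*})-\gamma_i>0$ for every activated $i$. What your approach buys is a self-contained optimality certificate that does not rely on already knowing $\mathcal{A}^*$; what it costs is re-deriving, via the convex-combination step, essentially the same mechanism that underlies Lemma \ref{Lemma3a}. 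Two minor loose ends you should tie up if you write this in full: justify $\mu(\mathcal{A}_{r_m})>0$ throughout (it follows because replacing violated coordinates $x_j>\hat{x}^{\max}_j$ by $\hat{x}^{\max}_j$ only decreases the allocated mass, so $c'(\mathcal{A}_{r_m})>0$ by induction from $c>0$), and note that feasibility $c\leq\sum_i\hat{x}^{\max}_i$ is implicitly assumed so the algorithm cannot exhaust all indices.
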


Lemma \ref{Lemma_convergence} in turn implies the following Theorem, whose proof is given in Appendix \ref{sec:theorem3}.  
\begin{restatable}{theorem}{theoremthree}\label{theorem3}
    Given the quadratic programming problem of \eqref{eq:problem}, $x_i(b_i, \hat{x}^{\max}_i) \equiv x_i(b_i, \boldsymbol{b}_{-i}, \hat{x}^{\max}_i, \hat{\boldsymbol{x}}^{\max}_{-i})$ is a continuous, non-increasing allocation function of $b_i$ for all $\hat{\boldsymbol{x}}^{\max}$ and $\boldsymbol{b}_{-i}$. 
\end{restatable}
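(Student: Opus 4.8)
The plan is to establish the two asserted properties—continuity and monotonicity—separately, exploiting the fact from Section \ref{sec:review} that $\delta_i = \delta_i(b_i)$ is continuous and monotonic increasing in $b_i$. Writing $a_j := \delta_j^k$, Problem \eqref{eq:problem} is $\min \sum_j a_j x_j^2$ over the \emph{fixed} compact convex polytope $P = \{\boldsymbol{x} : \sum_j x_j = c,\ 0 \le x_j \le \hat{x}^{\max}_j\}$. Since $a_i$ is a continuous, non-decreasing function of $b_i$ (for $k \ge 0$ and $\delta_i > 0$), it suffices to show that the optimal $x_i$ is continuous and non-increasing in $a_i$ with all other data held fixed. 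For $k=0$ the objective is free of $\delta_i$ and the claim is trivial, so I focus on $k>0$, where $a_i$ is strictly increasing in $\delta_i$.

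\textbf{Continuity.} The objective is jointly continuous in $(\boldsymbol{x}, b_i)$ and, being a sum of terms $a_j x_j^2$ with $a_j > 0$, strictly convex in $\boldsymbol{x}$; hence for each $b_i$ the minimizer over the fixed compact convex set $P$ is unique. By Berge's maximum theorem the solution correspondence is upper hemicontinuous, and uniqueness upgrades this to a continuous single-valued map $b_i \mapsto \boldsymbol{x}(b_i)$; in particular $x_i(b_i)$ is continuous. Equivalently, one may read continuity off the explicit solution in Lemma \ref{Lemma32}, checking that a constraint entering or leaving $\mathcal{A}^*$ does so through the value $x_i = \hat{x}^{\max}_i$, so the two branches agree in the limit.

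\textbf{Monotonicity (exchange argument).} Fix $a_i < a_i'$ with identical remaining data, let $\boldsymbol{x}, \boldsymbol{x}'$ be the respective optimizers, and suppose for contradiction that $x_i' > x_i$. Since $\sum_j x_j = \sum_j x_j' = c$, some index $l \neq i$ has $x_l' < x_l$. Recall that $\delta_i > 0$ makes the lower bounds slack (Appendix \ref{sec:Lemma32}), so $x_i, x_l, x_i' > 0$. The gradient of the objective is $(2a_j x_j)_j$. The exchange direction $e_i - e_l$ (increase $x_i$, decrease $x_l$ equally) is feasible at $\boldsymbol{x}$ because $x_i < x_i' \le \hat{x}^{\max}_i$ and $x_l > x_l' \ge 0$, so first-order optimality of $\boldsymbol{x}$ gives $a_i x_i \ge a_l x_l$; the direction $e_l - e_i$ is feasible at $\boldsymbol{x}'$ because $x_l' < x_l \le \hat{x}^{\max}_l$ and $x_i' > 0$, so optimality of $\boldsymbol{x}'$ gives $a_l x_l' \ge a_i' x_i'$. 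Chaining these with $x_l > x_l'$ yields $a_i x_i \ge a_l x_l > a_l x_l' \ge a_i' x_i'$, whereas $a_i' > a_i$ and $x_i' > x_i > 0$ give $a_i' x_i' > a_i x_i' > a_i x_i$, a contradiction. Hence $x_i' \le x_i$, so $x_i$ is non-increasing in $a_i$ and therefore in $b_i$. Together with continuity this proves the theorem.

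The main obstacle is the monotonicity step. The active set $\mathcal{A}^*$ can change as $b_i$ varies, so the explicit formula of Lemma \ref{Lemma32} by itself delivers monotonicity only within a region of constant $\mathcal{A}^*$, and gluing these regions would require tracking every active-set transition via Lemma \ref{Lemma3a} and invoking continuity to rule out upward jumps at the transitions. The exchange argument above avoids this bookkeeping by reasoning globally from first-order optimality at the two optima, and it needs only that the lower bounds $x_j \ge 0$ are slack, which is precisely what $\delta_i > 0$ guarantees.
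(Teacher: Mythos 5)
Your proof is correct, but it takes a genuinely different route from the paper's. The paper proves Theorem \ref{theorem3} by invoking Lemma \ref{Lemma_convergence} to write the solution in the closed form $x_i = c'(\mathcal{A})\,\delta_i^{-k}/\sum_{j\in\mathcal{A}^-}\delta_j^{-k}$ produced by Algorithm \ref{algorithm}, asserting continuity, and computing $\partial x_i/\partial\delta_i<0$ within a fixed active set; it does not explicitly handle what happens when $\mathcal{A}^*$ changes as $b_i$ varies, which is exactly the gap you identify in your closing paragraph. Your argument avoids both the algorithm and the active-set bookkeeping: continuity follows from strict convexity plus Berge's maximum theorem over the fixed polytope $P$ (the uniqueness of the minimizer is what makes the correspondence single-valued and hence continuous), and monotonicity follows from a clean exchange argument comparing first-order optimality conditions at the two optima for $a_i<a_i'$, using only that the lower bounds are slack (which Appendix \ref{sec:Lemma32} guarantees when $\delta_i>0$) to certify the feasible directions $e_i-e_l$ and $e_l-e_i$. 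The chain $a_ix_i\ge a_lx_l>a_lx_l'\ge a_i'x_i'$ against $a_i'x_i'>a_ix_i$ is airtight. What the paper's approach buys is an explicit formula tied to the implementable algorithm (useful elsewhere, e.g., in the proofs of Theorem \ref{theorem4} and Lemma \ref{LemmaSC}); what yours buys is a self-contained and more rigorous global argument that does not depend on Lemma \ref{Lemma_convergence} or on tracking active-set transitions via Lemma \ref{Lemma3a}. Both proofs rely on the regularity assumption that $\delta_i(b_i)$ is increasing in $b_i$, which you state up front and the paper invokes only implicitly in its final sentence.
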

Thus by Theorem \ref{theorem1} and Theorem \ref{theorem3}, $(b_i, \hat{x}^{\max}_i, \hat{x}_i)=(v_i, x^{\max}_i, x_i)$ is our dominant strategy for our allocation mechanism \eqref{eq:problem}. 

Note that our mechanism is not cost minimizing in its general form. However, the parameter $k$ in \eqref{eq:problem} controls the degree of \emph{efficiency} vs.\ \emph{equality}. While it is obvious that setting $k=0$ results in the equal allocation of work, we show that setting $k=\infty$ is equivalent to the cost minimizing approach of \cite{Koutsopoulos2013}. 
\begin{restatable}{theorem}{auctionmin} \label{theorem4}
    In our allocation mechanism, when $k=\infty$, the auction minimizes total expected cost. 
\end{restatable}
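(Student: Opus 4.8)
The plan is to read ``$k=\infty$'' as the limit $k\to\infty$ and to show that the allocation of Lemma~\ref{Lemma32} converges to the minimizer of the linear cost $\sum_i \delta_i x_i$ over the feasible set $\{\boldsymbol{x}: 0\le x_i\le \hat{x}^{\max}_i,\ \sum_i x_i = c\}$; by the result of \cite{Myerson1981,Koutsopoulos2013} recalled in Section~\ref{sec:review}, this linear objective equals the total expected cost, so convergence to its minimizer is exactly the claim. First I would pin down the target. Since all $\delta_i>0$, minimizing $\sum_i\delta_i x_i$ under the capacity and budget constraints is a fractional-knapsack-type linear program whose solution is the greedy ``fill-cheapest-first'' rule: relabel workers so that $\delta_1<\delta_2<\cdots<\delta_n$ and set $m=\min\{r:\sum_{i\le r}\hat{x}^{\max}_i\ge c\}$; then the cost-minimizing allocation is $x_i=\hat{x}^{\max}_i$ for $i<m$, $x_m=c-\sum_{i<m}\hat{x}^{\max}_i$, and $x_i=0$ for $i>m$.

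Next I would take the limit in the closed form of Lemma~\ref{Lemma32}. Two observations drive the argument. First, the ordering $\gamma_i=\delta_i^k\hat{x}^{\max}_i$ used there coincides with the $\delta$-ordering once $k$ is large, because for $\delta_i<\delta_j$ one has $\gamma_i/\gamma_j=(\delta_i/\delta_j)^k\,\hat{x}^{\max}_i/\hat{x}^{\max}_j\to 0$; hence $\mathcal{A}^*$ is a prefix $\{1,\ldots,r\}$ in the $\delta$-ordering. Second, for the non-tight workers the weight $\delta_i^{-k}$ concentrates on the smallest virtual welfare: the sum $\sum_{j\in{\mathcal{A}^*}^-}\delta_j^{-k}$ is dominated by its smallest-$\delta$ term $\delta_{r+1}^{-k}$, so $x_i=c'(\mathcal{A}^*)\,\delta_i^{-k}/\sum_{j\in{\mathcal{A}^*}^-}\delta_j^{-k}$ satisfies $x_{r+1}\to c'(\mathcal{A}^*)$ while $x_i\to 0$ for every other non-tight $i$, since $(\delta_{r+1}/\delta_i)^k\to 0$. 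Thus in the limit only the smallest-$\delta$ non-tight worker absorbs the leftover budget, matching the single ``partial'' worker of the greedy rule.

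It then remains to identify the prefix $r$. Using the KKT characterization \eqref{eq:kkt2}, worker $i$ is tight iff $\gamma_i<\mu(\mathcal{A}^*)$, and from $\mu(\mathcal{A}_r)=c'(\mathcal{A}_r)/\sum_{j>r}\delta_j^{-k}$ the same domination gives $\mu(\mathcal{A}_r)\sim\big(c-\sum_{i\le r}\hat{x}^{\max}_i\big)\,\delta_{r+1}^{k}$ as $k\to\infty$. Inserting the guess $\mathcal{A}^*=\{1,\ldots,m-1\}$, I would check the self-consistency conditions: for $i<m$ tightness reduces to $(\delta_i/\delta_m)^k\hat{x}^{\max}_i<c-\sum_{j<m}\hat{x}^{\max}_j$, whose left side vanishes; for $i=m$ non-tightness reduces to $\hat{x}^{\max}_m\ge c-\sum_{j<m}\hat{x}^{\max}_j$, which holds by the definition of $m$; and for $i>m$ the ratio $\gamma_i/\mu\to\infty$. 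Hence for all sufficiently large $k$ the tight set is exactly $\{1,\ldots,m-1\}$, and together with the previous paragraph the QP solution converges to the greedy allocation, establishing the theorem.

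I expect the main obstacle to be the rigorous treatment of the tight set $\mathcal{A}^*$ in the limit, rather than the elementary concentration of the $\delta_i^{-k}$ weights: one must confirm that $\mathcal{A}^*$ does not oscillate with $k$ but settles on the single prefix $\{1,\ldots,m-1\}$, for which the monotonicity afforded by Lemma~\ref{Lemma3a} should be useful. The degenerate cases must be dispatched separately, namely ties $\delta_i=\delta_{i+1}$ (where the linear minimizer is non-unique, so the statement is read up to this non-uniqueness) and the boundary case where $c$ coincides with a cumulative capacity sum $\sum_{i\le s}\hat{x}^{\max}_i$ (where the threshold worker is filled exactly to capacity, so $c'(\mathcal{A}^*)$ collapses and the split between $\mathcal{A}^*$ and the partial worker is ambiguous). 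Phrasing the conclusion as convergence of allocations on the compact feasible set lets these edge cases be handled without disturbing the generic argument.
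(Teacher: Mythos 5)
Your proposal is correct, and it reaches the paper's conclusion by a closely related but mechanically different route. The paper also reduces the claim to showing that the $k\to\infty$ allocation coincides with the greedy ``fill the lowest-$\delta_i$ workers first'' solution of the linear program $\min\sum_i\delta_i x_i$ subject to $\sum_i x_i=c$, $0\le x_i\le\hat{x}^{\max}_i$, and it relies on the same concentration fact $\delta_i^k/\delta_j^k\to\{0,1,\infty\}$. The difference is in how the tight set is handled: the paper tracks the iterations of Algorithm~\ref{algorithm} in the limit, arguing that at each iteration the residual budget $c'(\mathcal{A}^{(m-1)})$ collapses onto the single cheapest non-tight worker, who is then either saturated (and added to the tight set) or absorbs the leftover, so the algorithm reproduces the greedy allocation step by step. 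You instead bypass the algorithm entirely, work with the closed form of Lemma~\ref{Lemma32} and the KKT characterization \eqref{eq:kkt2}, guess the limiting active set $\mathcal{A}^*=\{1,\ldots,m-1\}$ in the $\delta$-ordering, and verify its self-consistency for all sufficiently large $k$; since the QP is strictly convex, a KKT-consistent active set is the unique optimum, so no appeal to the algorithm is needed. Your route buys a cleaner statement of what ``$k=\infty$'' means (convergence of the allocation vectors) and an explicit treatment of the degenerate cases --- ties $\delta_i=\delta_j$ and the boundary $c=\sum_{i\le s}\hat{x}^{\max}_i$ --- which the paper's proof passes over silently; the paper's route has the virtue of simultaneously confirming that Algorithm~\ref{algorithm} itself implements the greedy rule in the limit. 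One small point in your favor: the paper's displayed limit assigns the residual to the worker attaining $\min_q\tilde{x}^{(m)}_q$, which (since $\tilde{x}^{(m)}_i\propto\delta_i^{-k}$) should be the \emph{maximum}, i.e.\ the smallest-$\delta$ worker; your version states the concentration correctly.
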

\begin{proof}
	See appendix \ref{sec:theorem4}.
\end{proof} 

Now, denote by $\pi^{(k, \boldsymbol{b})}(b_i) = x^{(k)}_i(\boldsymbol{b})/c$ where $(k)$ indicates the dependence of $x_i(.)$ on $k$, and $\pi^{(k, \boldsymbol{b})}(.)$ can be interpreted as a probability mass function for the distribution of work among the workers. The following Lemma implies that as $k$ increases, workers with lower bids will get increasingly more share of the work while workers with higher bids will get less. 
\begin{restatable}{Lemma}{Lemmasc} \label{LemmaSC}
	When comparing the distributions $\pi^{(k, \boldsymbol{b})}(b_i)$ and $\pi^{(k+1, \boldsymbol{b})}(b_i)$, there always exists $t$ with $\pi^{(k, \boldsymbol{b})}(b_i) \leq \pi^{(k+1, \boldsymbol{b})}(b_i)$ for all $b_i < t$ and $\pi^{(k, \boldsymbol{b})}(b_i) \geq \pi^{(k+1, \boldsymbol{b})}(b_i)$ for all $b_i > t$. 
\end{restatable}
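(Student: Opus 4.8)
The plan is to recast the piecewise solution of Lemma~\ref{Lemma32} into a single ``water-filling'' expression that makes the active set irrelevant to the comparison, and then reduce the single-crossing claim to one elementary monotonicity observation. First I would combine the two cases of Lemma~\ref{Lemma32} with the multiplier characterisation in~\eqref{eq:kkt2} to write, for every worker,
\begin{equation*}
x_i^{(k)} = \min\!\left(\mu^{(k)}\,\delta_i^{-k},\ \hat{x}^{\max}_i\right),
\end{equation*}
where $\mu^{(k)} = \mu(\mathcal{A}^*) > 0$. Indeed, for $i \notin \mathcal{A}^*$ the non-tightness of the cap forces $\mu^{(k)}\delta_i^{-k} \le \hat{x}^{\max}_i$, so the minimum returns the first argument (the second line of Lemma~\ref{Lemma32}), while for $i \in \mathcal{A}^*$ the condition $\tilde{\lambda}_i = \mu^{(k)} - \delta_i^{k}\hat{x}^{\max}_i > 0$ from~\eqref{eq:kkt2} is precisely $\mu^{(k)}\delta_i^{-k} > \hat{x}^{\max}_i$, so the minimum returns the cap. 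Positivity of $\mu^{(k)}$ follows from $c'(\mathcal{A}^*) > 0$ and $\delta_j > 0$ (with the all-capped case being degenerate and trivially satisfying the claim since then $x_i^{(k)} = x_i^{(k+1)} = \hat{x}^{\max}_i$).

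Next I would introduce the \emph{uncapped} (desired) allocations $d_i^{(k)} = \mu^{(k)}\delta_i^{-k}$, so that $x_i^{(k)} = \min(d_i^{(k)}, \hat{x}^{\max}_i)$, and compare the two desired profiles through their ratio
\begin{equation*}
\frac{d_i^{(k+1)}}{d_i^{(k)}} = \frac{\mu^{(k+1)}}{\mu^{(k)}}\,\delta_i^{-1}.
\end{equation*}
Since $\delta_i = \delta_i(b_i)$ is strictly increasing in $b_i$, this ratio is strictly decreasing in $b_i$; hence it exceeds $1$ for bids below the level $t$ determined by $\delta_i(t) = \mu^{(k+1)}/\mu^{(k)}$ (well defined and unique by monotonicity of $\delta_i(\cdot)$) and is below $1$ above it. Equivalently, $d_i^{(k+1)} \ge d_i^{(k)}$ when $b_i < t$ and $d_i^{(k+1)} \le d_i^{(k)}$ when $b_i > t$: the desired allocations already obey the asserted single crossing, with crossing point $t$.

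Finally I would transfer the inequality from the desired allocations to the actual ones. The clamp $h_i(d) = \min(d, \hat{x}^{\max}_i)$ is non-decreasing in $d$ and, crucially, is the \emph{same} function for worker $i$ at levels $k$ and $k+1$, because the cap $\hat{x}^{\max}_i$ does not depend on $k$. Applying $h_i$ therefore preserves the direction of each inequality from the previous step, giving $x_i^{(k+1)} \ge x_i^{(k)}$ for $b_i < t$ and $x_i^{(k+1)} \le x_i^{(k)}$ for $b_i > t$; dividing by $c$ yields the statement for $\pi^{(k,\boldsymbol{b})}$ and $\pi^{(k+1,\boldsymbol{b})}$. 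The step I expect to be the main obstacle is precisely the one this argument sidesteps: the tight sets $\mathcal{A}^*$ at $k$ and $k+1$ can differ, so a worker may switch between capped and uncapped status, and a naive comparison of the two closed-form branches of Lemma~\ref{Lemma32} becomes a messy case analysis. The water-filling representation dissolves this, since the monotone, $k$-independent clamp carries the single crossing through regardless of which workers are capped. The only remaining care is the degenerate placement of $t$ when $\mu^{(k+1)}/\mu^{(k)}$ lies outside the range of the $\delta_i$, which is harmless: one of the two half-lines of workers is then empty and its inequality holds vacuously.
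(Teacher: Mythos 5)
Your proposal is correct, and it shares the paper's key structural step --- rewriting the KKT solution in the water-filling form $x_i^{(k)}=\min\bigl(\mu^{(k)}\delta_i^{-k},\hat{x}^{\max}_i\bigr)$, which is exactly the paper's equation \eqref{eq:min} --- but the way you extract the single crossing from it is genuinely different. The paper argues in two halves: existence of at least one crossing follows from the normalization $\sum_i\pi^{(k)}=\sum_i\pi^{(k+1)}=1$, and uniqueness follows by restricting attention to workers uncapped at both $k$ and $k+1$ and invoking the single crossing of $c'(k)\delta^{-k}$ and $c'(k+1)\delta^{-(k+1)}$; the claim that the crossing is ``determined solely'' by the doubly-uncapped set is asserted rather briefly. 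You instead exhibit the threshold explicitly, $\delta(t)=\mu^{(k+1)}/\mu^{(k)}$, via the monotone ratio $d_i^{(k+1)}/d_i^{(k)}=(\mu^{(k+1)}/\mu^{(k)})\,\delta_i^{-1}$, and then push the ordering through the $k$-independent, non-decreasing clamp $\min(\cdot,\hat{x}^{\max}_i)$. This buys you a cleaner handling of workers who switch between capped and uncapped status across $k$ and $k+1$ (the point the paper's argument glosses over), and it dispenses with the normalization step entirely; the only assumptions you rely on beyond the paper's are ones it already makes elsewhere ($\delta(\cdot)$ increasing, $\delta_i>0$, and $c'(\mathcal{A}^*)>0$ outside the all-capped degenerate case, which you correctly dispose of separately).
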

\begin{proof}
	See appendix \ref{sec:LemmaSC}. 
\end{proof}

Moreover, the following theorem shows that as $k$ increases, the total expected cost decreases. 
\begin{restatable}{theorem}{monodec} \label{theorem5}
    The total expected cost $C$ is a non-increasing function of $k$. 
\end{restatable}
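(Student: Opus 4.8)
The plan is to recognize that the total expected cost is an expectation of the increasing function $\delta$ under the work-allocation distribution $\pi^{(k,\boldsymbol{b})}$, and that Lemma \ref{LemmaSC} supplies exactly the single-crossing structure needed to compare these expectations across values of $k$. Since the expected cost is $C = \sum_i \delta_i x_i$ and $x_i^{(k)} = c\,\pi^{(k,\boldsymbol{b})}(b_i)$, I would first write $C(k) = c \sum_i \delta_i\, \pi^{(k,\boldsymbol{b})}(b_i)$. Because $\pi^{(k,\boldsymbol{b})}$ is a probability mass function, with $\sum_i \pi^{(k,\boldsymbol{b})}(b_i)=1$, this expresses $C(k)$ as $c$ times the expectation of $\delta_i$ under $\pi^{(k,\boldsymbol{b})}$, so the goal reduces to showing this expectation is non-increasing in $k$.

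Next I would fix the ordering of workers by their bids $b_i$. Since $\delta_i = \delta_i(b_i)$ is monotone increasing in $b_i$, ordering by $b_i$ coincides with ordering by $\delta_i$, so $\delta$ is an increasing function along the support of $\pi^{(k,\boldsymbol{b})}$. I would then invoke Lemma \ref{LemmaSC}, which gives a crossing point $t$ with $\pi^{(k,\boldsymbol{b})}(b_i) \le \pi^{(k+1,\boldsymbol{b})}(b_i)$ for $b_i < t$ and the reverse inequality for $b_i > t$. Writing $D(b) = \sum_{b_i \le b}\big[\pi^{(k+1,\boldsymbol{b})}(b_i) - \pi^{(k,\boldsymbol{b})}(b_i)\big]$ for the difference of the two cumulative distributions, the single-crossing property forces $D$ to increase from $D \ge 0$ up to $b=t$ and then decrease back to its value $D=0$ at the top of the support, the endpoint being zero precisely because both masses sum to one. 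Hence $D(b) \ge 0$ for every $b$, which is the statement that $\pi^{(k,\boldsymbol{b})}$ first-order stochastically dominates $\pi^{(k+1,\boldsymbol{b})}$.

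The conclusion then follows by summation by parts: ordering the workers as $\delta_1 \le \cdots \le \delta_n$ and letting $D_m = D(b_m)$ denote the partial sums, Abel's identity gives $C(k) - C(k+1) = c\sum_{m=1}^{n-1} (\delta_{m+1} - \delta_m)\,D_m$. Each factor $\delta_{m+1} - \delta_m$ is non-negative by monotonicity of $\delta$, and each $D_m$ is non-negative by the dominance established above, so the whole sum is non-negative and $C(k+1) \le C(k)$. (Equivalently, one may simply cite the standard fact that first-order stochastic dominance orders expectations of increasing functions.)

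Finally, I would observe that nothing in the argument requires $k$ and $k+1$ to be consecutive: the single-crossing conclusion of Lemma \ref{LemmaSC} holds for any increase in the exponent, so the same reasoning yields $C(k') \le C(k)$ whenever $k' > k$, establishing that $C$ is non-increasing as a function of $k$. The main obstacle I anticipate is the conversion of the pointwise single-crossing condition on the densities into the global inequality $D \ge 0$; the delicate points are to use the equal-total-mass normalization to pin $D$ to zero at the right endpoint, and to keep the ordering by $b_i$ consistent with the monotonicity of $\delta$ so that the correct direction of the inequality is obtained.
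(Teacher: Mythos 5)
Your proposal is correct and follows essentially the same route as the paper: express the cost as $c$ times the expectation of $\delta$ under $\pi^{(k,\boldsymbol{b})}$, convert the single-crossing property of Lemma \ref{LemmaSC} into pointwise dominance of the cumulative distributions, and conclude by monotonicity of $\delta$. The only cosmetic difference is that you use discrete Abel summation where the paper uses Stieltjes integration by parts with $\delta'(b)$; both are valid and yield the same inequality.
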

\begin{proof}
	See appendix \ref{sec:theorem5}
\end{proof} 

Lemma \ref{LemmaSC} and Theorem \ref{theorem5} imply that the requester can control the degree of extra spend for a more equal distribution of work. Here we show how this can motivate long term participation on the crowdsourcing platform by examining the $ROI$ for the participants. While the general definition of $ROI$ is given by 
\begin{equation*}
ROI = \frac{\text{Net profit}}{\text{Direct cost} + \text{Indirect costs}}, 
\end{equation*}
here we operationally define the ROI for Worker $i$ as
\begin{equation}
    ROI_i = \frac{\E (\tilde{p}_i - x_i v_i) - \gamma_i}{\E x_i v_i + \gamma_i} = \frac{\E U_i - \gamma_i}{\E x_i v_i + \gamma_i}  \label{eq:roi1}
\end{equation} 
where the expectation is taken with respect to the private values of all other workers, i.e. $(\boldsymbol{b}_{-i}, \hat{\boldsymbol{x}}^{\max}_{-i}, \hat{\boldsymbol{x}}_{-i})$, while holding the worker's own value constant. $\E x_i v_i$ then represents the expected direct cost in participation in the crowdsourcing platform, and $\gamma_i$ the indirect cost, and we assume the dominant strategy is followed such that $\hat{x}_i=x_i, \forall i$. We note that this definition is in the same spirit as \cite{Gao2015}. Generally, $\gamma_i$ is assumed to be $0$, and the individual rational property of the mechanism (Theorem \ref{theorem2}) implies $ROI_i \geq 0$. When $\gamma_i > 0$, it is possible for $ROI_i$ to be negative, resulting in a net loss for the worker. In this paper, we assume that Worker $i$ will drop out of the platform if their $ROI_i$ is less than 0 (since this represents long term loss), in other words, if $\E U_i < \gamma_i$. 
From Theorem \ref{theorem3}, $x_i$ is a continuous function of $b_i$. From Lemma \ref{LemmaSC}, we know that $x^{(k)}_i(b_i=\bar{b}) \geq x^{(k+1)}_i(b_i = \bar{b})$. Therefore, from \eqref{eq:theorem2}, we can expect 
\begin{equation*}
U^{(k)}_i = \beta_i \int_{b_i}^{\bar{b}} x^{(k)}_i(s) ds \geq U^{(k+1)} = \beta_i \int_{b_i}^{\bar{b}} x^{(k+1)}_i(s) ds
\end{equation*} 
for workers whose $b_i$ are close to $\bar{b}$, the upper end of its domain. In other words, we can expect as $k$ decreases, the work ($x_i$), the utility ($U_i$) as well as $ROI_i$ of those with higher $b_i(=v_i/\beta_i)$ to go up, and hence their likelihood staying in the platform. Although those with lower values of $b_i$ could possibly have decreased $U_i$ with decreasing $k$, they will still have higher utility than those with higher bids (assuming equal $\hat{x}^{\max}_i$ and $\gamma_i$) and thus they will not be leaving the platform. Hence, overall, there will be increased participation. We will illustrate this in the simulation study in the following section.

\section{Simulations} \label{sec:simulations}
In this section, we validate our analytic results introduced in the previous section through simulations.  Source codes for the simulations may be found in the first author's github page\footnote{\texttt{https://github.com/tshmak/TwoStageAuctionSimulation}}. Simulations are run on a MacBook Pro (Intel i7 CPU, 32GB RAM) using R version 3.5.1 using a single CPU.  
For the results that follow, we assume the worker's bids $b_i$ come from a truncated log-Normal distribution 
\begin{equation*}
b_i \overset{\text{iid}}{\sim} \text{TruncatedLogNormal}(\mu=0, \sigma=0.3), b_i \in (0, \bar{b}=2.01)
\end{equation*} 
where the truncation point ($\bar{b}$) corresponds roughly to the $99^{\th}$ percentile of the un-truncated distribution. The $5^{\th}, 25^{\th}, 50^{\th}, 75^{\th}$, and $95^{\th}$ percentiles of this distribution are 0.61, 0.81, 1.00, 1.22, and 1.60 respectively. The scale of $b_i$ does not matter and hence we arbitrarily set the distribution to be roughly centered at 1. Under this distribution, the largest bid is typically around 3 to 4 times the smallest bid, which we believe reflects the pattern in real-life bids on such crowdsourcing platform. It can be verified that this log-Normal distribution is \emph{regular}, i.e. $\delta(b_i)$ is a monotonic increasing function of $b_i$. Furthermore, we assume $\hat{x}^{\max}_i = x^{\max}_i$ and $\beta_i$ are randomly and independently drawn from 
\begin{align*}
x^{\max}_i  &\sim 100\times \text{logNormal}(\mu=0, \sigma=0.3) \\
\beta_i &\sim \text{Uniform}(0.9, 1)
\end{align*}
Thus, $x^{\max}_i$ has a similar distribution to $b_i$, except scaled by 100. We set $c$ in the constraint on $\sum_i x_i$ to 
\begin{equation*}
c = \sum_i x_i = 100n\rho
\end{equation*}
where $\rho \in \{0.1, 0.5\}$, such that the total allocated work roughly equals either 0.1 or 0.5 of the total workers' capacity. Evidently, $\rho=0.1$ will result in a more competitive auction than $\rho=0.5$, and is expected to yield a higher efficiency, but lower $ROI$ for the workers. As our theoretical results show that $k$ has a simple monotonic relationship with the allocation of work (Lemma \ref{LemmaSC}) and expected costs (Theorem \ref{theorem5})), we set $k \in \{0,1,2,4,8, \infty \}$. The similarity of the simulation results for $k=8$ and $k=\infty$ suggest that this choice effectively covers the entire domain of $k$.  For $n$, we set $n \in \{10, 100, 1000\}$, which represents what we believe are typical numbers of workers encountered in real crowdsourcing platforms. We repeat the simulations 100 times for each combination of $k$, $n$, and $\rho$. 

\subsection{Verification of Theorem \ref{theorem1}}
We first verify that, from Theorem \ref{theorem1}, $(b_i, \hat{x}^{\max}_i, \hat{x}_i)=(v_i/\beta_i, x^{\max}_i, x_i)$ is a dominant strategy. To show this, we randomly generate vectors of $\boldsymbol{b}$, $\boldsymbol{x}^{\max}$, and $\boldsymbol{\beta}$ as above. Without loss of generality, we note the utility for one of the workers, say Worker 1, given $(b_1, \hat{x}^{\max}_1, \hat{x}_1)=(v_1/\beta_1, x^{\max}_1, x_1)$. In order to increase the probability of $\hat{x}^{\max}_i$ being limiting (i.e., $=x_i$, because otherwise $\hat{x}^{\max}_1$ has no effect on the allocation $x_i$), we restrict our attention to samples with $F(b_1)<\rho$. We use the \texttt{bobyqa} method \cite{Powell2009} to search the neighbourhood for a better set of parameters. We let the utility function of Worker 1 take the form of \eqref{eq:trueutility}, with 
\begin{equation*}
\omega(\hat{x}_1) = 
\begin{cases}
    1 & \text{if } \hat{x}_1 \leq x^{\max}_1 \\
    1 + \frac{\hat{x}_1 - x^{\max}_1}{x^{\max}_1}s & \text{if } \hat{x}_1 > x^{\max}_1
\end{cases}
\end{equation*}
being the downweighting function which reflects the diminishing value of return as $\hat{x}_i > x^{\max}_i$. Here, $s < 0$ is a slope parameter which determines the rate at which $\omega$ decreases from 1. Our result (Appendix \ref{sec:relax}) implies that  if $s \leq -1$, $\omega(\hat{x}_1)$ satisfies the form of \eqref{eq:omega}, and $(b_1, \hat{x}^{\max}_1, \hat{x}_1)=(v_1/\beta_1, x^{\max}_1, x_1)$ will be a dominant strategy. We verify (using our search algorithm) that no better parameters than the nominally dominant are found when $s< -1$. For $-1 < s < 0$, we tabulate the average departure of the optimal parameters, denoted by $\theta^*_1 = (b^*_1, {{}\hat{x}^{\max}}^*_1, \hat{x}^*_i)$, from the nominally dominant $\theta_1 = (v_1/\beta_1, x^{\max}_1, x_1)$, for $s \in \{-0.25, -0.5\}$ in Table \ref{tab:verify}. In a large proportion of cases, $(b^*_1, {{}\hat{x}^{\max}}^*_1, \hat{x}^*_i)=(v_1/\beta_1, x^{\max}_1, x_1)$, i.e. the nominally dominant strategy continues to be the best. The proportion of cases in which ${{}\hat{x}^{\max}}^*_1 \neq x^{\max}_1$ increases with $k$, as does the relative difference of ${{}\hat{x}^{\max}}^*_1$ and $x^{\max}_1$ when ${{}\hat{x}^{\max}}^*_1 \neq x^{\max}_1$. To a lesser extent, this is observed for $b^*_1$ as well. 

\begin{table}[!t]
    \centering
    \caption{Departure from the nominally dominant strategy when the utility function of worker 1 does not satisfy \eqref{eq:omega}} \label{tab:verify}
    \begin{tabular}{cccc}
        \hline
        \multicolumn{2}{c}{
        $\begin{array}{c}
        \%\{\theta^*_1 = \theta_1 \} \\
        \left( \begin{array}{c}
        \text{Mean} \\ \text{rel. diff.}\dagger
        \end{array} | \theta^*_1 \neq \theta_1 \right) \\
        \end{array}
        $} & $s=-0.5$ & $s=-0.25$ \vspace{1mm} \\
        \hline
$k=0$ & $\theta_1 = \begin{array}{c} v_1/\beta_1 \\ x^{\max}_1 \\ x_1 \end{array}$ & $\begin{array}{c} 100\%\ (NA) \\99\%\ (0.15) \\100\% (NA) \\ \end{array}$ & $\begin{array}{c} 100\%\ (NA) \\99\%\ (0.13) \\100\% (NA) \\ \end{array}$ \vspace{1mm} \\
$k=1$ & $\theta_1 = \begin{array}{c} v_1/\beta_1 \\ x^{\max}_1 \\ x_1 \end{array}$ & $\begin{array}{c} 100\%\ (NA) \\100\%\ (0.097) \\100\% (NA) \\ \end{array}$ & $\begin{array}{c} 100\%\ (NA) \\97\%\ (0.11) \\100\% (NA) \\ \end{array}$ \vspace{1mm} \\
$k=2$ & $\theta_1 = \begin{array}{c} v_1/\beta_1 \\ x^{\max}_1 \\ x_1 \end{array}$ & $\begin{array}{c} 100\%\ (0.25) \\97\%\ (0.064) \\100\% (NA) \\ \end{array}$ & $\begin{array}{c} 100\%\ (0.25) \\88\%\ (0.23) \\100\% (NA) \\ \end{array}$ \vspace{1mm} \\
$k=4$ & $\theta_1 = \begin{array}{c} v_1/\beta_1 \\ x^{\max}_1 \\ x_1 \end{array}$ & $\begin{array}{c} 100\%\ (NA) \\85\%\ (0.13) \\100\% (NA) \\ \end{array}$ & $\begin{array}{c} 100\%\ (0.089) \\73\%\ (0.45) \\100\% (NA) \\ \end{array}$ \vspace{1mm} \\
$k=8$ & $\theta_1 = \begin{array}{c} v_1/\beta_1 \\ x^{\max}_1 \\ x_1 \end{array}$ & $\begin{array}{c} 100\%\ (0.18) \\67\%\ (0.2) \\100\% (NA) \\ \end{array}$ & $\begin{array}{c} 100\%\ (0.11) \\53\%\ (0.66) \\100\% (NA) \\ \end{array}$ \vspace{1mm} \\
$k=\infty$ & $\theta_1 = \begin{array}{c} v_1/\beta_1 \\ x^{\max}_1 \\ x_1 \end{array}$ & $\begin{array}{c} 99\%\ (0.17) \\23\%\ (0.43) \\100\% (NA) \\ \end{array}$ & $\begin{array}{c} 98\%\ (0.35) \\21\%\ (1.2) \\100\% (NA) \\ \end{array}$ \vspace{1mm} \\
\hline
\multicolumn{4}{l}{$\dagger$ Mean rel. diff. = $\text{Mean}\left( \frac{| \theta^*_1 - \theta_1 |}{\theta_1}\right)$}
    \end{tabular} 
\end{table}

\subsection{Balancing cost and workers' $ROI$}
In Section \ref{sec:analysis}, we show how the parameter $k$ in \eqref{eq:problem} affects the total cost on the requester's side and the $ROI$ on the worker's side. In this simulation, in order to estimate the $ROI$, we set the bid of the first worker to $b_1 = F^{-1}(a), x^{\max}_1=100$, where $a \in \{0.1, 0.2, \ldots, 0.9\}$. In other words, his bid is set to the $10^{\th}, \ldots, 90^{\th}$ percentile of the distribution of bids. We also assume this worker's $\beta_1=0.95$. $ROI$ is calculated as 
\begin{equation}
ROI(v_1) = \frac{\sum_j^{100} \tilde{p}_{1j}/100}{\sum_j^{100} x_{1j} v_1/100 + \gamma_1} - 1 \label{eq:ROI}
\end{equation}
where $\tilde{p}_{1j}$ and $x_{1j}$ are the payment and workload in the $j^{\th}$ repeat for the first worker, respectively, and $\gamma_1 \in \{0,1,2,3,4,5\}$ is the indirect cost. Thus, the indirect cost is roughly 0 to 5\% of the cost of performing $x^{\max}_1=100$ units of work. 

Fig.~\ref{fig:ROI} shows the relationship between the $ROI(v_1)$ and $v_1$. We use monotonic smoothed estimates of $ROI$ to improve accuracy, where the smoothing is done using the default settings in the \texttt{scam} package in \texttt{R} \cite{Pya2014}.  
\begin{figure}[!t]
    \centering
    \centerline{\includegraphics[width=1\linewidth]{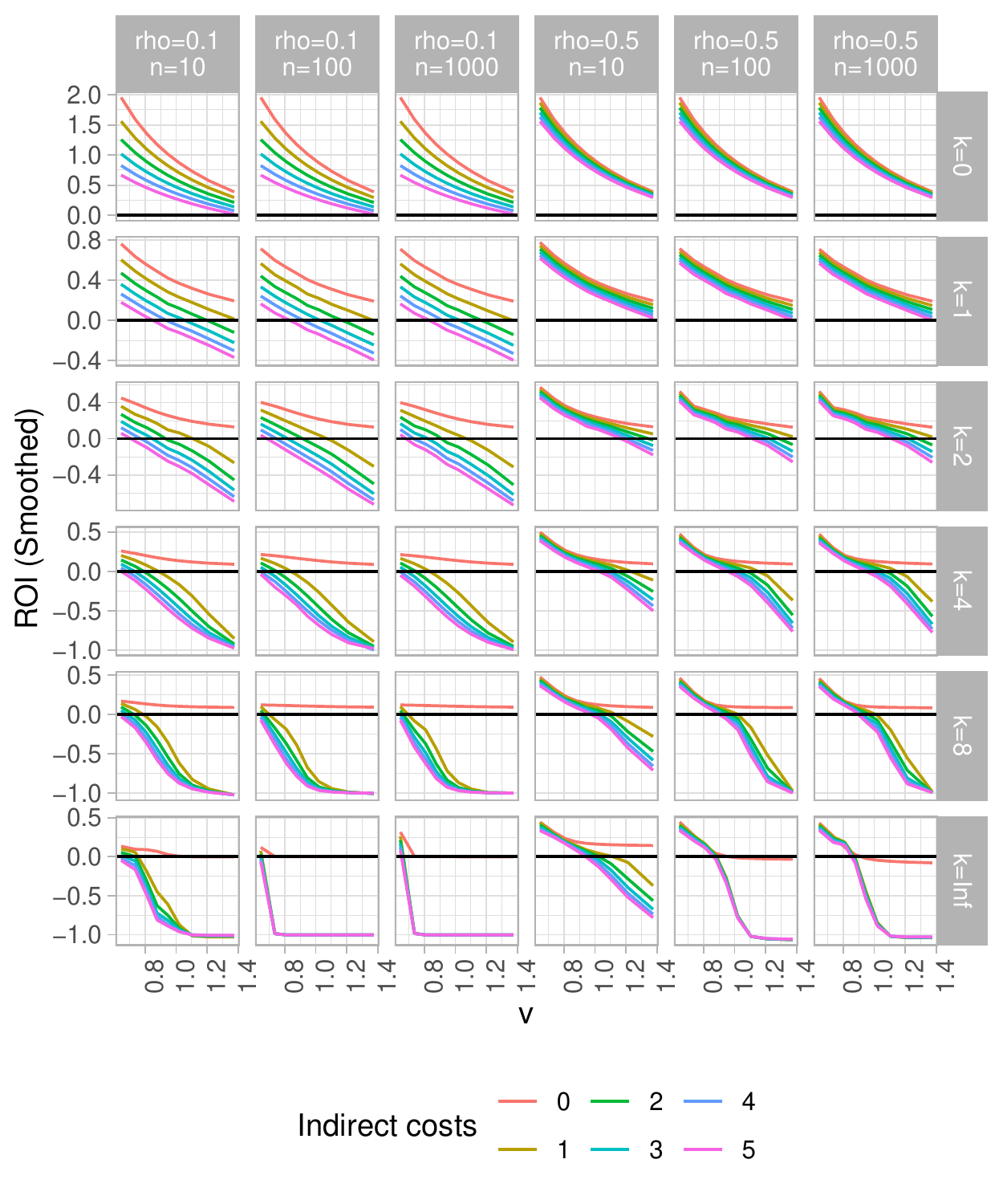}}
    \caption{Relationship between $ROI$ and unit cost of work $v_1$}
    \label{fig:ROI}
\end{figure}
In the figure, we see that $ROI$ is always positive when $\gamma_1=0$. This corresponds to the individual rationality property of the mechanism. $ROI$ also decreases with $v_i$, according to our predictions in Section \ref{sec:analysis}. $ROI$ increases with decreasing $k$, corresponding to increasing equality in the allocation of work. It also increases with $\rho$, because of the increased probability of obtaining work. Notably, the impact of $n$ on $ROI$ is relatively small. 

In Fig.~\ref{fig:participation}, we examine the impact of $k$ on long-term participation. We assume workers with $ROI(v_i) \geq 0$ will continue to participate, whereas those with $ROI(v_i) < 0$ will drop out. The proportion of $ROI(v_i) \geq 0$ is estimated as the proportion of $v_i$ to the left of the point at which the $ROI$ curve in Fig.~\ref{fig:ROI} crosses 0. 
\begin{figure}[!t]
    \centering
    \includegraphics[width=1\linewidth]{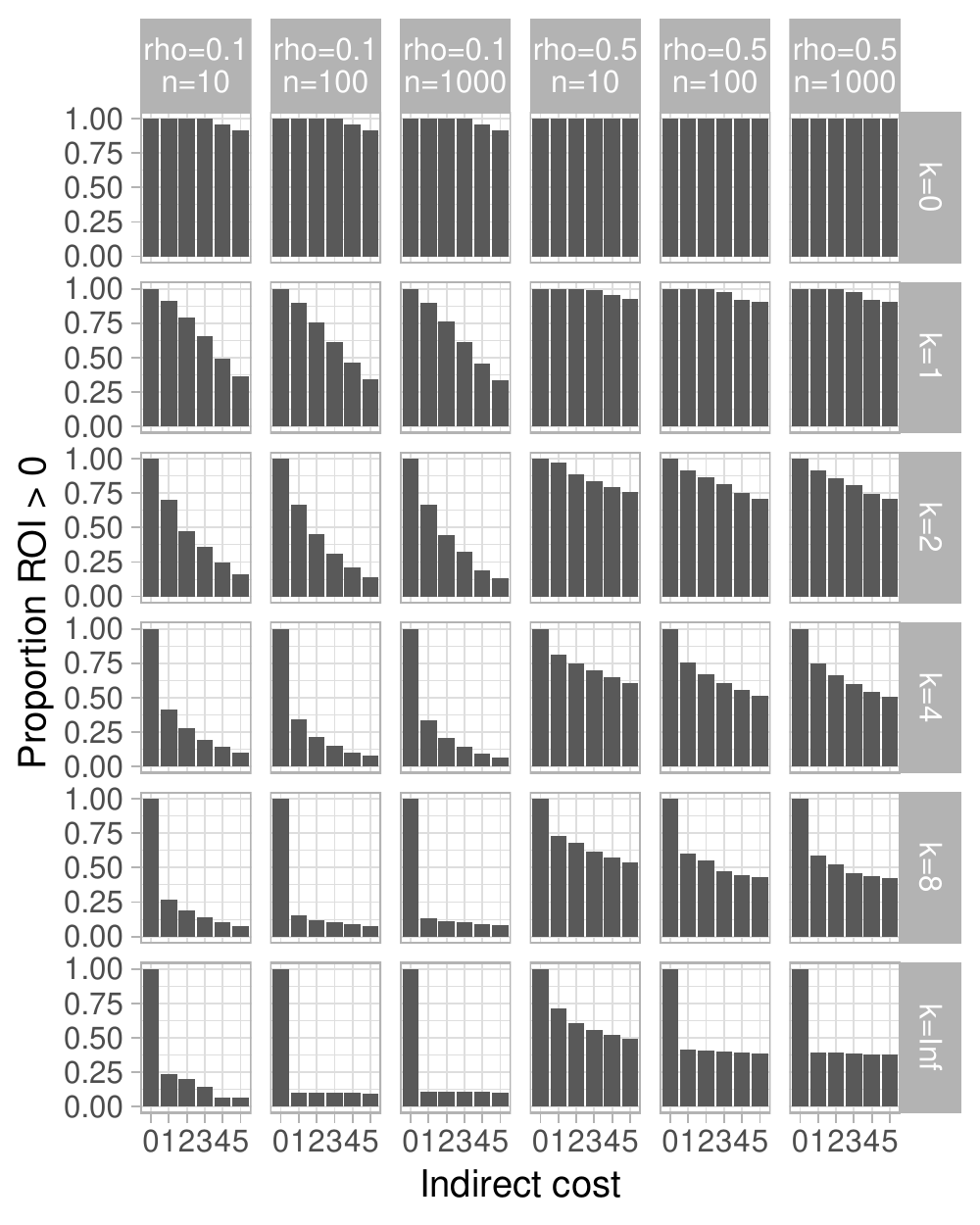}
    \caption{Relationship between $k$ and probability of $ROI$ greater than 0}
    \label{fig:participation}
\end{figure} 
Evidently, participation increases as $k$ decreases. It also increases with $\rho$. For example, when $k=2, n=1000, \rho=0.5, \gamma_1=3$, the proportion of workers with $ROI$ greater than 0 is around 0.8. This means that around 20\% of the workforce is expected to drop out from the platform. The platform can only be sustained at this level of work allocation if the number of new participants more than compensate for the dropout. 

In Fig.~\ref{fig:inflation}, we examine the increased total cost from the requester's side as $k$ decreases.  
\begin{figure}[!t]
    \centering
    \includegraphics[width=1\linewidth]{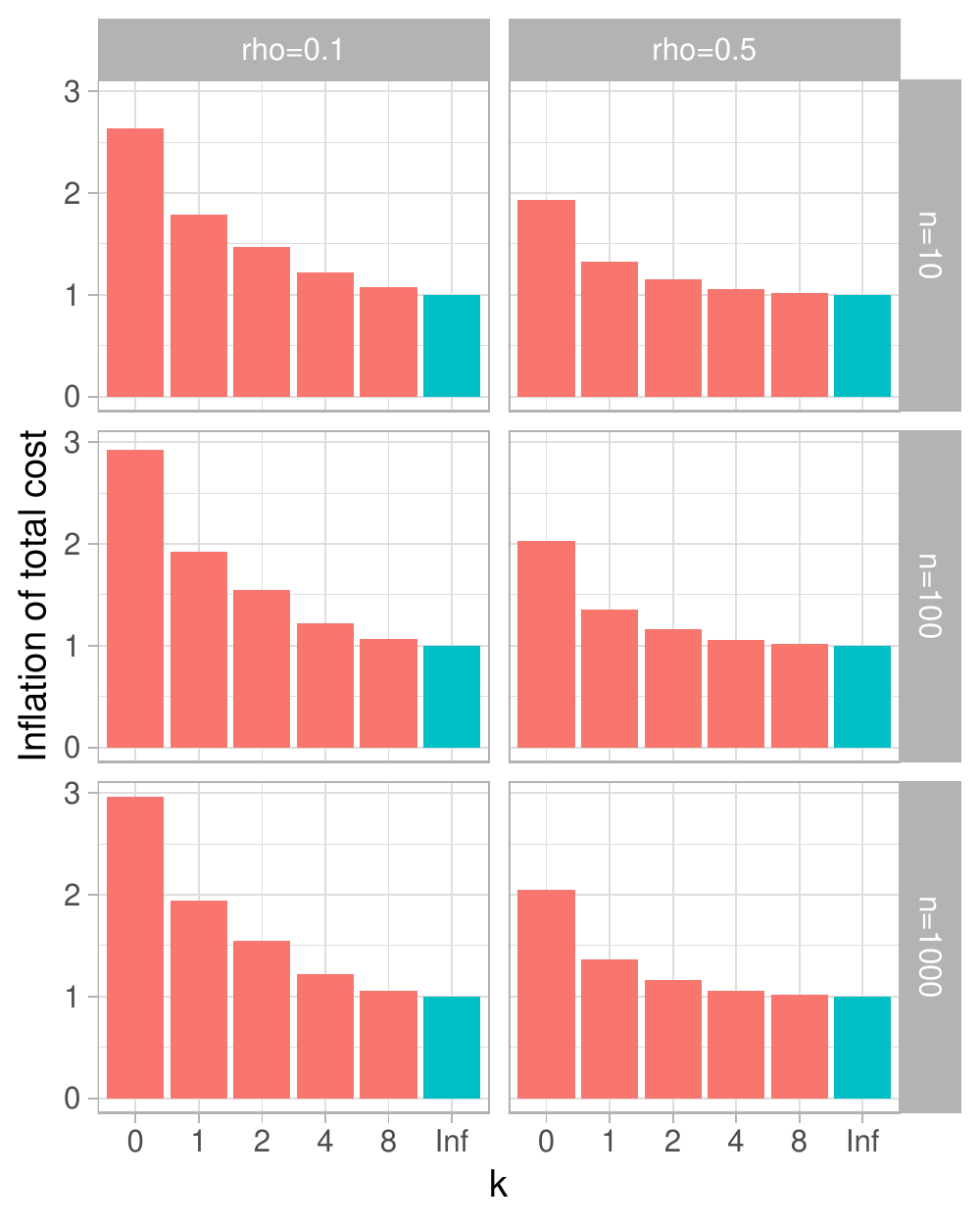}
    \caption{Relationship between inflation of total cost and $k$}
    \label{fig:inflation}
\end{figure}  
Theorem \ref{theorem4} shows that $k=\infty$ corresponds to the optimal allocation from the efficiency point of view. As $k$ decreases, the total cost increases as a result of the increased equality in allocation. The inflation is more apparent in the  case of $\rho=0.1$, and slightly more apparent in the case of $n=100$ or $n=1000$, over the case of $n=10$. 

Finally, we examine participation levels as a function of total costs in Fig.~\ref{fig:inflationROI}. 
\begin{figure}[!t]
    \centering
    \includegraphics[width=1\linewidth]{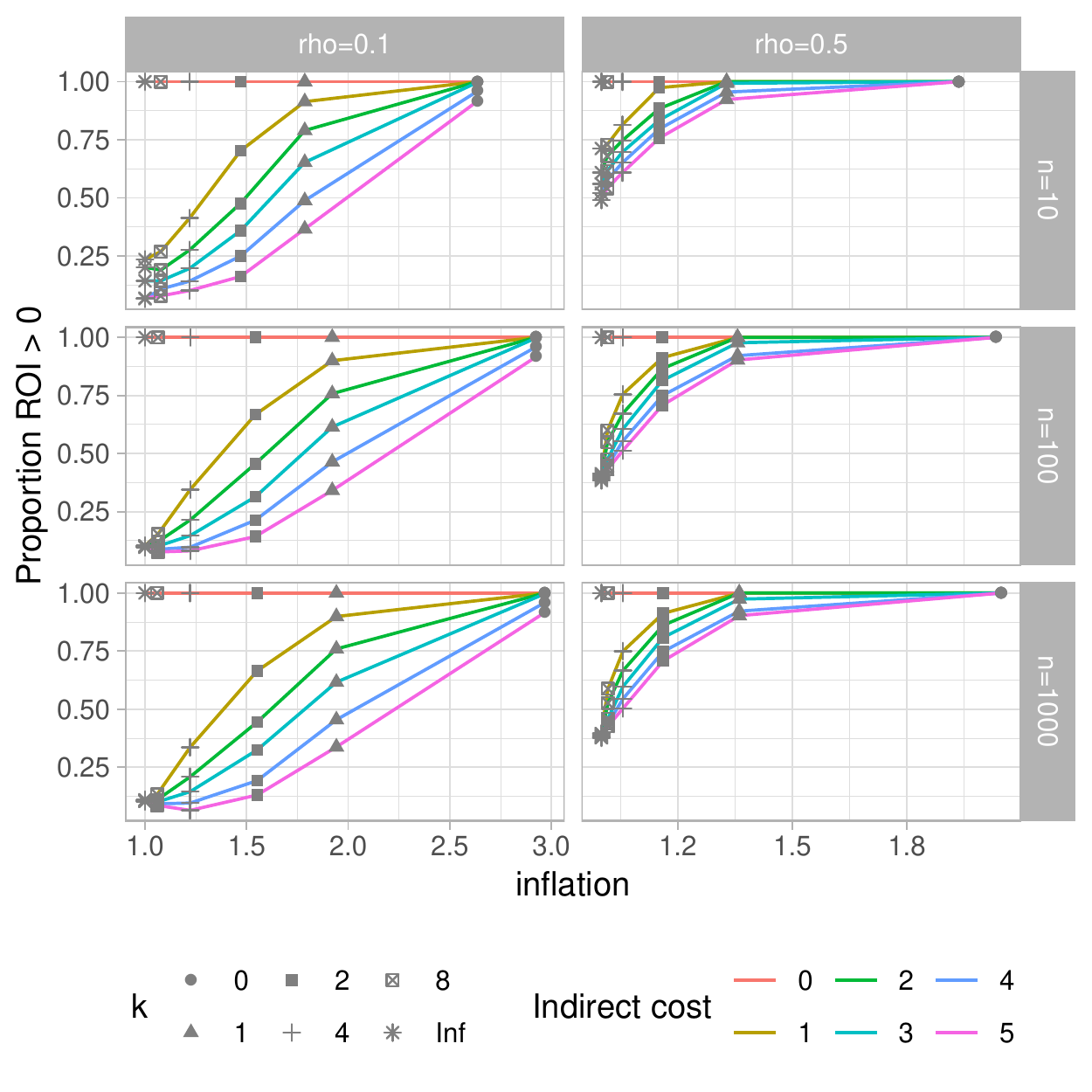}
    \caption{Relationship between inflation of total cost and proportion of $ROI$ greater than 1}
    \label{fig:inflationROI}
\end{figure}  
Since both expected long term participation and total cost are functions of $k$, it is evident that increasing participation must come at a cost of increasing total cost. We see that when $\rho=0.1$, to obtain significant increases in participation, one must at least double the minimum cost. However, when $\rho=0.5$, the cost of increasing participation is relatively less. Increasing the total spend, say, by 30\% can already achieve significant increases in participation levels. 

Overall, our simulations verify our claim that it is possible to control the expected return on investment ($ROI$) and therefore long-term participation by setting different values for the parameter $k$ (Figs.~\ref{fig:ROI} and \ref{fig:participation}). They also demonstrate that increasing participation rates can come at significant cost (Fig.~\ref{fig:inflation} and \ref{fig:inflationROI}), and the cost to achieve near 100\% retention of workers can be prohibitive to the requester. Apart from adjusting $k$ there are other strategies the requester could employ to retain workers. First, she can increase $\rho$, the ratio of the total available work ($c$) to the total number of workers ($n$). However, if $\rho$ approaches 1, then effectively all workers obtain work, and there is no longer competition leading to inflation of costs. Secondly, she can set $c$ independently of $n$. In this case, the cost of work will fluctuate according to the amount of competition in the auction. How $c$ and $k$ should be set in practice to maximize profit for a particular task needs to be determined on a case-by-case basis using actual data on the platform, and is beyond the scope of this paper. 

\section{Conclusion} \label{sec:conclusion}
In this paper, we extend the optimal auction theory for application on a crowdsourcing platform, where the bid for work consists not only the unit cost, but also the maximum amount of work the workers are willing to do and the actual work done. We prove that a dominant strategy exists in this case. This is especially valuable to the requester, since it guards against a worker performing only a small proportion of the allocated work, leading to inefficiency. We also prove that it is advantageous for the worker to complete as much of the allocated work as possible, and hence it is advantageous for him to truthfully report the maximum work he is able or willing to do in the bid. We also show that the dominant strategy is somewhat robust to different forms of the utility function that the worker may have, taking into account that there may be diminishing value of returns. 

Secondly, we propose and analyse a novel allocation mechanism, which allows the requester to balance between cost efficiency and equality in the allocation of work. Cost efficiency is obviously valuable to the requester, but the allocation of work only to the workers with the lowest bid may put off a large proportion of workers from future participation on the crowdsourcing platform. We show that increase in equality will lead to greater retention of workers through increase in their return on investment. Moreover, there is a tradeoff between cost and equality in allocation, and with sufficient prior knowledge on the distribution of bids, the requester can predict this tradeoff and select the best allocation that suits her needs. 

Finally, we note that in applying our crowdsourcing mechanism to a real online platform, there may be additional complexities due to the workers not being able to correctly predict the proportion of acceptable work that they supply. For example, a worker may not complete the work assigned because he overestimates his capacity when he first bids. In general, it is also possible that the workers are not aware of the dominant strategy and does not bid in the most advantageous manner. In principle, additional rewards or penalties (for failing to provide a certain level of acceptable work) may provide additional safeguards for truthful reporting of the workers' true valuation of the work. Future work to examine these measures theoretically as well as on a real crowdsourcing platform would be important next steps for research. 

\appendices

\section{Relaxing the assumptions on the form of the utility function} \label{sec:relax}
In \eqref{eq:utility00}, we assume that the utility is 
\begin{equation}
U_i = 
\begin{cases}
\tilde{p}_i - \hat{x}_i v_i & \text{if } \hat{x}_i \leq x^{\max}_i \\
0 & \text{if } \hat{x}_i > x^{\max}_i
\end{cases} \label{eq:utility000}
\end{equation}
The utility function may be considered unrealistic in the sense that it drops too suddenly once $\hat{x}_i$ becomes greater than $x^{\max}_i$. Here, we show that the form of the utility function can be relaxed and the conclusions of this paper will still hold. To see this, we first rewrite \eqref{eq:utility000} as 
\begin{align*}
U_i &= 
\omega(\hat{x}_i) (\tilde{p}_i - \hat{x}_i v_i) 
\end{align*} 
where 
\begin{align*}
\omega(\hat{x}_i) &= 
\begin{cases}
1 & \text{if } \hat{x}_i \leq x^{\max}_i \\
0 & \text{if } \hat{x}_i > x^{\max}_i
\end{cases}
\end{align*}
and $\omega(.)$ can be seen as a downweighting function. It may be observed that in the proof of Lemma \ref{Lemma2} (see Appendix \ref{sec:Lemma2}), the only place where the form of $U_i(.)$ when $\hat{x}_i > x^{\max}_i$ matters is in \eqref{eq:utility1} and \eqref{eq:utility2}. For simplicity, we do not consider the case where $\hat{x}_i > x_i$, since it is obviously not profitable for the worker to submit more work than the maximum requested. We rewrite \eqref{eq:utility1} and \eqref{eq:utility2} below using the downweighting notation
\begin{align}
U_i(\hat{x}_i) &= 
\begin{cases}
\omega(\hat{x}_i)\hat{x}_i\zeta_i & \text{if } \hat{x}_i > x^{\max}_i \\
\hat{x}_i\zeta_i & \text{if } \hat{x}_i \leq x_i  
\end{cases} \label{eq:utility1a} \\ 
\frac{dU_i}{d\hat{x}_i} &= 
\begin{cases}
\zeta_i(\omega_i  + \hat{x}_i \frac{d\omega_i}{d\hat{x}_i}) & \text{if } \hat{x}_i > x^{\max}_i \\
\zeta_i & \text{if } \hat{x}_i \leq x_i . \label{eq:utility2a}
\end{cases} 
\end{align}
where 
\begin{equation*}
\zeta_i = \frac{p_i\beta_i}{x_i} - v_i \geq 0, \quad \text{by Theorem \ref{theorem2}}
\end{equation*}

The same conclusion would be obtained as in Lemma \ref{Lemma2} if $\omega_i  + \hat{x}_i \frac{d\omega_i}{d\hat{x}_i} < 0$. Note that we require $\omega_i = 1$ for $\hat{x}_i \leq x_i$. Thus, we posit the following form for $\omega_i$:
\begin{equation*}
\omega_i(\hat{x}_i) = 1 -  \psi \max \left( \frac{\hat{x}_i - x^{\max}_i}{\hat{x}_i},0 \right)
\end{equation*}
such that if $\psi > 1$, then $\frac{d\omega_i}{d\hat{x}_i} < 0$. This implies that as long as $\omega(\hat{x}_i)$ satisfies, 
\begin{equation*}
\omega_i(\hat{x}_i)  
\begin{cases}
= 1 & \text{if } \hat{x}_i \leq x^{\max}_i \\
< 1 - \frac{\hat{x}_i - x^{\max}_i}{\hat{x}_i} = \frac{x^{\max}_i}{\hat{x}_i} & \text{if } \hat{x}_i > x^{\max}_i
\end{cases}, 
\end{equation*}
$U_i$ will continue to be maximized at $\hat{x}_i = x^{\max}_i$, and therefore our conclusions about the dominance of $(b_i, \hat{x}^{\max}_i, \hat{x}_i) = (v_i/\beta_i, x^{\max}_i, x_i)$ will continue to hold. 

\section{Proof of Lemma \ref{Lemma1}} \label{sec:Lemma1}
	The utility of Worker $i$ is given by 
	\begin{align}
	U_i(b_i) &= \E_{\alpha_i} \tilde{p}_i(b_i) - v_i\hat{x}_i(b_i) \nonumber \\
    &= \E_{\alpha_i} \alpha_i p_i(b_i) - v_ix_i(b_i) \nonumber \\
    &= \beta_i p_i(b_i) - v_ix_i(b_i) \nonumber \\
	&= \beta_i \left( b_i x_i(b_i) + \int_{b_i}^{\bar{b}_i} x_i(s) ds \right) - v_i x_i(b_i) \nonumber \\
	&= x_i(b_i)(b_i\beta_i - v_i) + \beta_i \int_{b_i}^{\bar{b}_i} x_i(s) ds \label{eq:utility0}
	\end{align}
	Now, let $b_i^*=v_i/\beta_i$, and $U_i^* = U_i(b_i^*) = \beta_i \int_{b_i^*}^{\bar{b}_i} x_i(s) ds$. Suppose $b_i > b_i^*$, we have
	\begin{align*}
	U_i(b_i) &= U_i^* - \beta\int_{b_i^*}^{b_i} x_i(s) ds - \beta(x_i(b_i^*)b_i^* - x_i(b_i)b_i) + \\
	&\qquad v_ix_i(b_i^*) - v_ix_i(b_i) \\
	&= U_i^* - \beta \left [ \int_{b_i^*}^{b_i} x_i(s) ds + (b_i - b_i^*)x_i(b_i) \right ] \\
	&\leq U_i^* 
	\end{align*}
	since $x_i(b_i)$ is non-increasing. Likewise, for $b_i < b_i^*$, 
	\begin{align*}
	U_i(b_i) &= U_i^* + \beta\int_{b_i}^{b_i^*} x_i(s) ds - \beta(x_i(b_i^*)b_i^* - x_i(b_i)b_i) + \\
	&\qquad v_ix_i(b_i^*) - v_ix_i(b_i) \\
	&= U_i^* + \beta \left [ \int_{b_i}^{b_i^*} x_i(s) ds - (b_i^* - b_i)x_i(b_i) \right ] \\
	&\leq U_i^* 
	\end{align*}
	since $x_i(b_i)$ is non-increasing. Hence $b_i = b_i^*$ is a dominant strategy. We also note that strategy is independent of the value of $\hat{x}^{\max}_i$. 

\section{Proof of Lemma \ref{Lemma2}} \label{sec:Lemma2}
	The utility of Worker $i$ is given by  
	\begin{equation}
	U_i(\hat{x}_i) = \begin{cases}
	0 & \text{if } \hat{x}_i > x^{\max}_i \\
	p_i\beta_i - \hat{x}_i v_i & \text{if } x_i \leq \hat{x}_i \leq x^{\max}_i \\
	p_i\frac{\beta_i \hat{x}_i}{x_i} - \hat{x}_i v_i & \text{if } \hat{x}_i \leq x_i  
	\end{cases} \label{eq:utility1}
	\end{equation}
	which implies 
	\begin{equation}
	\frac{dU_i}{d\hat{x}_i} = 
	\begin{cases}
	0 & \text{if } \hat{x}_i > x^{\max}_i \\
	-  v_i & \text{if } x_i < \hat{x}_i < x^{\max}_i \\
	\frac{p_i\beta_i}{x_i} - v_i & \text{if } \hat{x}_i < x_i . \label{eq:utility2}
	\end{cases} 
	\end{equation}
	Thus, $U_i$ is a piecewise linear function of $\hat{x}_i$. From \eqref{eq:utility0}, $U_i= \beta_ip_i(b_i) - v_ix_i(b_i) \geq 0$ if $b_i = v_i/\beta_i$. This implies $\frac{p_i\beta_i}{x_i} - v_i \geq 0$ in \eqref{eq:utility2}, and hence the maximum of \eqref{eq:utility1} is achieved at $\hat{x}_i = x_i$ if $x_i < x^{\max}_i$ or $x^{\max}_i$ otherwise. Thus, $\hat{x}_i=\min(x_i, x^{\max}_i)$ is a dominant strategy. 

\section{Proof of Lemma \ref{Lemma3}} \label{sec:Lemma3}
	In terms of $\hat{x}^{\max}_i$, and assuming $\hat{x}_i = \min(x_i, x^{\max}_i)$, we can write the utility function \eqref{eq:utility00} as
	\begin{equation}
	U_i(\hat{x}^{\max}_i) = P_i(\hat{x}^{\max}_i) p_i(\hat{x}^{\max}_i) - \min(x_i(\hat{x}^{\max}_i), x^{\max}_i)v_i \label{eq:utility3}
	\end{equation}
	where 
	\begin{align}
	P_i(\hat{x}^{\max}_i) &= \frac{\beta_i \min(x_i(\hat{x}^{\max}_i), x^{\max}_i)}{x_i(\hat{x}^{\max}_i)} \label{eq:P} \\
	p_i(\hat{x}^{\max}_i) &= x_i(\hat{x}^{\max}_i)b_i + \int_{b_i}^{\bar{b}_i} x_i(\hat{x}^{\max}_i, s) ds \label{eq:p}
	\end{align}
	Note that 
	\begin{equation}
	x_i(\hat{x}^{\max}_i) =
	\begin{cases}
	\hat{x}^{\max}_i & \text{if } \hat{x}^{\max}_i < x^{\crit}_i(b_i) \\
	x^{\crit}_i(b_i) & \text{if } \hat{x}^{\max}_i \geq x^{\crit}_i(b_i)
	\end{cases} \label{eq:xbehaviour}
	\end{equation}
	where $x^{\crit}_i(b_i) = \underset{\hat{x}^{\max}_i}{\max\ } x_i(\hat{x}^{\max}_i, b_i)$ is the critical value at which $\hat{x}^{\max}_i$ stops becoming limiting. 
	Thus, 
	\begin{equation}
	\frac{\partial x_i(\hat{x}^{\max}_i, b_i)}{\partial \hat{x}^{\max}_i} = 
	\begin{cases}
	1 & \text{if } \hat{x}^{\max}_i < x^{\crit}_i(b_i) \\
	0 & \text{if } \hat{x}^{\max}_i > x^{\crit}_i(b_i)
	\end{cases} \label{eq:diff}
	\end{equation}
	We will now prove that $U_i$ is maximized when $\hat{x}^{\max}_i=x^{\max}_i$ by considering three cases: 
	\begin{enumerate}
		\item $\hat{x}^{\max}_i \geq x^{\crit}_i(b_i)$,
		\item $\hat{x}^{\max}_i < \min(x^{\crit}_i(b_i), x^{\max}_i)$, and
		\item $x^{\max}_i < \hat{x}^{\max}_i < x^{\crit}_i(b_i)$.
	\end{enumerate}
	\subsection*{\textbf{Case 1}) $\hat{x}^{\max}_i \geq x^{\crit}_i(b_i)$} 
	By assumption, $x_i(b_i)$ is a non-increasing function of $b_i$. Hence, $x^{\crit}_i(b_i) = \underset{\hat{x}^{\max}_i}{\max\ } x_i(\hat{x}^{\max}_i, b_i)$ is a non-increasing function of $b_i$. Thus, if $\hat{x}^{\max}_i > x^{\crit}_i(b_i)$, then $\hat{x}^{\max}_i > x^{\crit}_i(s), \forall s > b_i$. This implies  
	\begin{equation}
	\frac{\partial }{\partial \hat{x}^{\max}_i} \int_{b_i}^{\bar{b}_i} x_i(\hat{x}^{\max}_i, s) ds = 0 \label{eq:diffint}
	\end{equation}
	and hence from \eqref{eq:p},\eqref{eq:diff}, and \eqref{eq:diffint}, we have
	\begin{equation*}
	p_i(\hat{x}^{\max}_i) = \text{constant}.
	\end{equation*} 
	From \eqref{eq:P},
	\begin{equation*}
	P_i(\hat{x}^{\max}_i) = 
	\begin{cases}
	\beta_i & \text{if } x^{\crit}_i(b_i) \leq x^{\max}_i \\
	\beta_i \frac{x^{\max}_i}{x^{\crit}_i(b_i)} < \beta_i & \text{if } x^{\crit}_i(b_i) > x^{\max}_i 
	\end{cases} 
	\end{equation*}
	and therefore from \eqref{eq:utility3}, 
	\begin{equation*}
	U_i(\hat{x}^{\max}_i) = 
	\begin{cases}
	\beta_i p_i - v_ix^{\crit}_i(b_i) & \text{if } x^{\crit}_i(b_i) \leq x^{\max}_i \\
	\frac{\beta_i x^{\max}_i}{x^{\crit}_i(b_i)}p_i - v_ix^{\max}_i & \text{if } x^{\crit}_i(b_i) > x^{\max}_i 
	\end{cases}
	\end{equation*}
	and does not depend on $\hat{x}^{\max}_i$. 
	
	\subsection*{\textbf{Case 2}) $\hat{x}^{\max}_i < \min(x^{\crit}_i(b_i), x^{\max}_i)$}
	From \eqref{eq:utility3}, \eqref{eq:P}, and \eqref{eq:xbehaviour}, 
	\begin{align*}
	U_i(\hat{x}^{\max}_i) &= \beta_i p_i(\hat{x}^{\max}_i) - \hat{x}^{\max}_iv_i \\ 
	\frac{\partial U_i}{\partial\hat{x}^{\max}_i} &= \beta_i \frac{\partial p_i}{\partial \hat{x}^{\max}_i}  - v_i 
	\end{align*} 
	and from \eqref{eq:p} and \eqref{eq:xbehaviour}, 
	\begin{align*}
	\frac{\partial p_i}{\partial \hat{x}^{\max}_i} &= b_i + \frac{\partial }{\partial \hat{x}^{\max}_i} \int_{b_i}^{\bar{b}_i} x_i(\hat{x}^{\max}_i, s) ds \geq b_i.
	\end{align*}
	Plugging in $b_i = v_i/\beta_i$, 
	\begin{equation}
	\frac{\partial U_i}{\partial\hat{x}^{\max}_i} = \beta_i\frac{\partial }{\partial \hat{x}^{\max}_i} \int_{v_i/\beta_i}^{\bar{b}_i} x_i(\hat{x}^{\max}_i, s) ds \geq 0, \label{eq:dU}
	\end{equation}
	where the presupposed condition is $\hat{x}^{\max}_i < \min(x^{\crit}_i, x^{\max}_i)$. Hence, in this case, $U_i$ is maximized when $\hat{x}^{\max}_i \rightarrow \min(x^{\crit}_i, x^{\max}_i)$. 
	
	\subsection*{\textbf{Case 3}) $x^{\max}_i < \hat{x}^{\max}_i < x^{\crit}_i(b_i)$}
	Plugging in $x_i(\hat{x}^{\max}_i)=\hat{x}^{\max}_i$ from \eqref{eq:xbehaviour} into \eqref{eq:P} and \eqref{eq:utility3}, we have 
	\begin{equation*}
	U_i(\hat{x}^{\max}_i) = \beta_i p_i(\hat{x}^{\max}_i)  \frac{x^{\max}_i}{\hat{x}^{\max}_i} - x^{\max}_i v_i 
	\end{equation*}
	For any given $(x^{\max}_i, \hat{x}^{\max}_i)$, define $k = x^{\max}_i/\hat{x}^{\max}_i$, and 
	\begin{align*}
	R(\hat{x}^{\max}_i, k) &= p_i(\hat{x}^{\max}_i)k \\
	R'(\hat{x}^{\max}_i, k) &= p_i(k\hat{x}^{\max}_i)
	\end{align*}
	such that 
	\begin{align}
	U_i(\hat{x}^{\max}_i) &= \beta_i R(\hat{x}^{\max}_i, k) - x^{\max}_iv_i \label{eq:util1} \\
	U_i(x^{\max}_i) &= \beta_i R'(\hat{x}^{\max}_i, k) - x^{\max}_iv_i.  \label{eq:util2}
	\end{align}
	By noting that $x_i(b_i)$ is non-increasing, and inverting the role of $b_i$ and $x_i$ in the integral of \eqref{eq:MyersonPay} and plugging in $x_i = \hat{x}^{\max}_i$, it can be observed that 
	\begin{equation*}
	p_i(\hat{x}^{\max}_i) = \int_0^{\hat{x}^{\max}_i} \min(\bar{b}, b_i(x_i)) dx_i
	\end{equation*}
	Writing 
	\begin{align*}
	r(\hat{x}^{\max}_i, s) &= p_i(\hat{x}^{\max}_i) \\
	r'(\hat{x}^{\max}_i, s) &= \hat{x}^{\max}_i\min(\bar{b}, b_i(s\hat{x}^{\max}_i)),
	\end{align*}
	we have 
	\begin{align*}
	R(\hat{x}^{\max}_i, k) &= \int_0^k r(\hat{x}^{\max}_i,s) ds \\
	R'(\hat{x}^{\max}_i, k) &= \int_0^k r'(\hat{x}^{\max}_i, s) ds
	\end{align*}
	Since $\min(\bar{b}, b_i(s\hat{x}^{\max}_i))$ is non-increasing, and $\int_0^1 r(\hat{x}^{\max}_i,s) ds = \int_0^1 r'(\hat{x}^{\max}_i, s) ds = p_i(\hat{x}^{\max}_i)$, $r(\hat{x}^{\max}_i, s)$ and $r'(\hat{x}^{\max}_i, s)$ must show the single-crossing property, i.e. there exists $t$ such that $r'(\hat{x}^{\max}_i, s) \geq r(\hat{x}^{\max}_i, s)$ for all $0 \leq s\leq t$ and $r'(\hat{x}^{\max}_i, s) \leq r(\hat{x}^{\max}_i, s)$ for all $t \leq s \leq 1$. Hence, for $0 \leq k \leq t$, $R'(\hat{x}^{\max}_i,k) \geq R(\hat{x}^{\max}_i, k)$ and for $t \leq k \leq 1$, $1 - R'(\hat{x}^{\max}_i, k) \leq 1 - R(\hat{x}^{\max}_i, k)$, implying 
	\begin{equation*}
	R'(\hat{x}^{\max}_i,k) \geq R(\hat{x}^{\max}_i, k), \quad \forall k.
	\end{equation*}
	Thus, with reference to \eqref{eq:util1} and \eqref{eq:util2}, we have $U_i(x^{\max}_i) \geq U_i(\hat{x}^{\max}_i)$ for all $(\hat{x}^{\max}_i, x^{\max}_i)$. 
	
	Gathering the results from \textbf{Case 2} and \textbf{Case 3}, if $x^{\max}_i < x^{\crit}_i(b_i)$, then $\hat{x}^{\max}_i=x^{\max}_i$ is a dominant strategy. If $x^{\max}_i \geq x^{\crit}_i(b_i)$ (\textbf{Case 1}), then any $\hat{x}^{\max}_i \geq x^{\crit}_i(b_i)$, including $\hat{x}^{\max}_i=x^{\max}_i$, is dominant. However, since $x^{\crit}_i(b_i)$ is unknown, $\hat{x}^{\max}_i=x^{\max}_i$ is the only dominant strategy. 

\section{Proof of Theorem \ref{theorem1}} \label{sec:theorem1}
    From Lemma \ref{Lemma2}, $\hat{x}_i = \min(x_i, x^{\max}_i)$ is a dominant strategy if $b_i = v_i/\beta_i$. However, if $b_i \neq v_i/\beta_i$, then the only other possibility is that $\frac{dU_i}{\hat{x}_i}=\frac{p_i\beta_i}{x_i} - v_i < 0$ in \eqref{eq:utility2}, resulting in $U_i$ being maximized at $\hat{x}_i = 0$. However, $\hat{x}_i = 0$ implies $U_i = 0$, and so it cannot be part of a global dominant strategy. Therefore a global dominant strategy, if it exists, must have $\hat{x}_i=\min(x_i, x^{\max}_i)$. 
    
    Now, given $\hat{x}_i=\min(x_i, x^{\max}_i)$, there are two possibilities: $\hat{x}_i = x_i \leq x^{\max}_i$ or $\hat{x}_i=x^{\max}_i < x_i$. In the former case, $\hat{x}_i=x_i\leq x^{\max}_i$ implies $b_i =v_i/\beta_i$ is a dominant strategy by Lemma \ref{Lemma1}. Together, by Lemma \ref{Lemma3}, we have $(b_i, \hat{x}^{\max}_i, \hat{x}_i)=(b_i, x^{\max}_i, x_i)$ as a possible global dominant strategy. However, we need to exclude the possibility that there exists a better strategy in $(b_i, \hat{x}^{\max}_i, \hat{x}_i)=(b_i \neq v_i/\beta_i, \hat{x}^{\max}_i \neq x^{\max}_i, x^{\max}_i)$. 
    
    To see that $(b_i, \hat{x}^{\max}_i, \hat{x}_i)=(b_i \neq v_i/\beta_i, \hat{x}^{\max}_i \neq x^{\max}_i, x^{\max}_i)$ is not a viable strategy, we note that in the proof of Lemma \ref{Lemma3}, the only place where $b_i \neq v_i/\beta_i$ may have an impact on the conclusion is in \eqref{eq:dU}, which would now become 
    \begin{equation*}
    \frac{\partial U_i}{\partial\hat{x}^{\max}_i} = \beta_i\frac{\partial }{\partial \hat{x}^{\max}_i} \int_{v_i/\beta_i}^{\bar{b}_i} x_i(\hat{x}^{\max}_i, s) ds + \beta_ib_i - v_i. 
    \end{equation*}
    We note that if $b_i > v_i/\beta_i$, then $\frac{\partial U_i}{\partial\hat{x}^{\max}_i} > 0$, and we still have the same conclusion that all other strategies would be dominated by $\hat{x}^{\max}_i = x^{\max}_i$. However, $\hat{x}^{\max}_i = x^{\max}_i$ implies $x_i \leq x^{\max}_i$, and hence $\hat{x}_i = x^{\max}_i < x_i$ is not possible. This leaves us with the possibility that $(b_i, \hat{x}^{\max}_i, \hat{x}_i)=(b_i < v_i/\beta_i, \hat{x}^{\max}_i \neq x^{\max}_i, x^{\max}_i)$ may present a better strategy. To see that this is not viable either, we reconsider the utility function in Lemma \ref{Lemma1}. Now, given $\hat{x}_i = x^{\max}_i$, we have
    \begin{align*}
    U_i(b_i) &= \beta_ix^{\max}_i\frac{p_i(b_i)}{x_i(b_i)} - x^{\max}_iv_i \\
    \frac{d \frac{p_i(b_i)}{x_i(b_i)}}{db_i} &= \frac{1}{x_i(b_i)^2}
    \left[
    x_i(b_i) \frac{d}{db_i} \int_{b_i}^{\bar{b}} x_i(s) ds - p_i(b_i)\frac{dx_i(b_i)}{db_i}
    \right]
    \end{align*}
    Since $\frac{d}{db_i} \int_{b_i}^{\bar{b}} x_i(s) ds = x_i(b_i)$ and $\frac{dx_i(b_i)}{db_i} \leq 0$, the term in the bracket is positive so long as $x_i(b_i)>0$. Hence, in the domain $0 <b_i \leq v_i/\beta_i$, $U_i$ is maximized at $b_i=v_i/\beta_i$. Thus, $b_i=v_i/\beta_i$ is the only viable dominant strategy, implying $(b_i, \hat{x}^{\max}_i, \hat{x}_i)=(b_i, x^{\max}_i, x_i)$ is a globally dominant strategy. 

\section{Proof of Lemma \ref{Lemma32}} \label{sec:Lemma32}
	First, we note that the constraints $x_i \geq 0$ are never tight if $\delta_i > 0$ for all $i$. This can be proved by a simple contradiction. Without loss of generality, suppose $x_1 = 0$ and $x_2 > 0$ is part of the optimal solution of \eqref{eq:problem}. To maintain the constraint $\sum_i x_i = c$, we let $x_i$ be increased by $\epsilon$ while $x_2$ be decreased by $\epsilon$. The resulting objective is $\delta_2^{k}x_2^2 + \epsilon(\epsilon(\delta_1^k+\delta_2^k) - 2\delta_2^kx_2) + \sum_{i > 2} \delta_i^k x_i^2$. Since $\epsilon(\epsilon(\delta_1^k+\delta_2^k) - 2\delta_2^kx_2)$ is a quadratic on $\epsilon$ with solutions at $\epsilon=0$ and $\epsilon(\delta_1^k+\delta_2^k) - 2\delta_2^kx_2=0$, there always exists $\epsilon>0$ such that $\delta_2^{2k}x_2^2 + \epsilon(\epsilon(\delta_1^k+\delta_2^k) - 2\delta_2^kx_2) <\delta_2^{2k}x_2^2$. Thus, $(x_1 = 0, x_2 > 0)$ cannot be optimal. 
	
	With the remaining constraints ($\sum_i x_i = c$, $x_i \leq \hat{x}^{\max}_i$), we apply the KKT conditions for the solution of a quadratic programming problem, and we have 

	\begin{equation}
	\begin{pmatrix}
	\text{diag}(\boldsymbol{\delta})^k & \boldsymbol{A} & -\boldsymbol{1} \\
	\boldsymbol{A}^T & \boldsymbol{0} & \boldsymbol{0} \\
	-\boldsymbol{1}^T & \boldsymbol{0}^T & 0 
	\end{pmatrix}
	\begin{pmatrix}
	\boldsymbol{x} \\
	\boldsymbol{\lambda}_{\mathcal{A}^*} \\
	\mu
	\end{pmatrix} 
	= 
	\begin{pmatrix}
	\boldsymbol{0} \\
	\hat{\boldsymbol{x}}^{\max}_i \\
	c
	\end{pmatrix}
	\label{eq:KKT0}
	\end{equation}
	where $\text{diag}(\boldsymbol{\delta})^k$ is the diagonal matrix whose diagonals are given by $(\delta_1, \ldots, \delta_n)$, $\boldsymbol{A}=(\boldsymbol{e}_1, \boldsymbol{e}_2, \ldots, \boldsymbol{e}_{|\mathcal{A}^*|})$, and $\boldsymbol{e}_j$ are vectors whose $i^{\text{th}}$ element is 1 and the rest are 0 if $i \in \mathcal{A}^*$. $\boldsymbol{\lambda}_{\mathcal{A}^*} > 0$ and $\mu$ are Lagrange multipliers. Solving \eqref{eq:KKT0} for $\boldsymbol{x}$ gives us the results. 

\section{Proof of Lemma \ref{Lemma3a}} \label{sec:Lemma3a}
        Let $q=r+1$. Denote by $a_i = \delta_i^{-k}$. Given $\hat{x}^{\max}_j, a_j > 0$ for all $j$, the following is implied by $\gamma_i < \mu(\mathcal{A}_r)$ and $q < i$: 
        \begin{align}
        \frac{\hat{x}^{\max}_i}{a_{i}} &< \mu(\mathcal{A}_r) = \frac{c - \sum_{j=1}^{r} \hat{x}^{\max}_j}{\sum_{j=q=r+1}^n a_j} \label{eq:cond1} \\
        \frac{\hat{x}^{\max}_q}{a_q} &\leq \frac{\hat{x}^{\max}_i}{a_i} \label{eq:cond2} 
        \end{align}
        Without loss of generality, let us assume $c - \sum_{j=1}^{r} \hat{x}^{\max}_j = c' =\sum_{j=q}^n a_j$, where $c' > \max{} (x^{\max}_q, a_q)$. (The generality of this rescaling is because Problem \eqref{eq:problem} is evidently scale invariant --- rescaling $\boldsymbol{b}$ as well as $\hat{\boldsymbol{x}}^{\max}$ and $c$ will lead to an equivalent allocation.) Then \eqref{eq:cond1} simplifies to 
        \begin{equation}
        \frac{\hat{x}^{\max}_i}{a_i} < 1 \label{eq:cond1a}
        \end{equation}
        We now prove by contradiction. Suppose contrary to expectation, we have $\gamma_i \geq \mu(\mathcal{A}_q)$. This implies 
        \begin{equation}
        \frac{\hat{x}^{\max}_i}{a_i} \geq \frac{c'-\hat{x}^{\max}_q}{c' - a_q}. \label{eq:cond4} 
        \end{equation}
        \eqref{eq:cond4} and \eqref{eq:cond2} imply 
        \begin{equation*}
        c'(\hat{x}^{\max}_i - a_i) \geq \hat{x}^{\max}_i a_q - \hat{x}^{\max}_q a_i \geq 0
        \end{equation*}
        which contradicts \eqref{eq:cond1a}. Therefore, $\gamma_i < \mu(\mathcal{A}_r) \implies \gamma_i < \mu(\mathcal{A}_{r+1})$. By induction, therefore, $\gamma_i < \mu(\mathcal{A}_r) \implies \gamma_i < \mu(\mathcal{A}_q)$ for all $r < q < i$.   

\section{Proof of Lemma \ref{Lemma_convergence}} \label{sec:Lemma_convergence}
	First, note that at iteration $m$, the $i$ whose $\gamma_i < \mu(\mathcal{A}^{(m-1)})$ will be added to $\mathcal{A}^{(m)}$. This parallels the tight constraint criterion in the KKT condition \eqref{eq:kkt2}, in which $\gamma_i < \mu(\mathcal{A}^*)$ are tight constraints. Thus, similar to the case of $\mathcal{A}^*$, we have $\mathcal{A}^{(m)} \in \mathbb{H}$ for all $m$. 
	
	Now it remains to show if $i \in \mathcal{A}^{(m)}$, then $i \in \mathcal{A}^*$. This is evident since $\mathcal{A}^{(0)}=\mathcal{A}_0$ at the first iteration. Now $i \in \mathcal{A}^{(1)} \text{ iff } \gamma_i < \mu(\mathcal{A}^{(0)})$. Hence all constraints that are in $\mathcal{A}^{(1)}$ will also be in $\mathcal{A}^*$, by Lemma \ref{Lemma3a}. Applying induction to $\mathcal{A}^{(m)}$ for $m > 1$, all tight constraints at the end of Algorithm \ref{algorithm} must be tight in $\mathcal{A}^*$. Since all of the constraints $\mathcal{A}^{(m)}$ are tight constraints in $\mathcal{A}^*$, the solution of Algorithm \ref{algorithm} must be a local minimum. Now because \eqref{eq:problem} is a quadratic programming problem with a strictly convex objective function, it must also be the global minimum. Conversely, $\mathcal{A}^{(m)}$ must correspond to $\mathcal{A}^*$ at the end of the algorithm. 

\section{Proof of Theorem \ref{theorem3}} \label{sec:theorem3}
	Since the solution of \eqref{eq:problem} can be obtained from Algorithm \ref{algorithm} by Lemma \ref{Lemma_convergence}, we have $x_i(b_i, \hat{x}^{\max}_i) = x^{(m)}_i = c'(\mathcal{A}^{(m-1)}) \delta_i^{-k}/\sum_{j \in {\mathcal{A}^{(m-1)}}^-} \delta_j^{-k}$ for some iteration $m$ in Algorithm \ref{algorithm} and $x_i(b_i, \hat{x}^{\max}_i)$ is a continuous function of $b_i$. It can be observed that 
	\begin{equation*}
	\frac{\partial x^{(m)}_i}{\partial \delta_i} < 0. 
	\end{equation*}
	Thus, if $\delta_i(b_i)$ is monotonically increasing in $b_i$, then as $b_i$ decreases, $x^{(m)}_i$ always increases, and therefore $x_i(b_i, \hat{x}^{\max}_i)$ must either increase or stay the same. 

\section{Proof of Theorem \ref{theorem4}} \label{sec:theorem4}
	According to \cite{Myerson1981,Koutsopoulos2013}, if an algorithm solves the optimization problem $\{\min\ \sum_i \delta_i x_i, \text{ s.t. } \sum_i x_i = c, 0 \leq x_i \leq \hat{x}^{\max}_i\}$, it minimizes expected costs. It is obvious that this is equivalent to allocating $c$ to the workers with the lowest values of $\delta_i$, subject to the constraints. Thus, if Algorithm \ref{algorithm} also performs this allocation as $k \rightarrow \infty$, it minimizes total expected costs. 
	
	To see this, we note that
	\begin{equation}
	\frac{\delta_i^k}{\delta_j^k} \underset{k \rightarrow \infty}{\rightarrow} 
	\begin{cases}
	\infty & \text{if } \delta_i > \delta_j \\
	1 & \text{if } \delta_i = \delta_j \\
	0 & \text{if } \delta_i < \delta_j
	\end{cases}. 
	\end{equation}
	At the $m^{\text{th}}$ iteration, as $k \rightarrow \infty$, 
	\begin{align}
	x_i^{(m)} &= 
	\begin{cases}
	\hat{x}^{\max}_i & \text{if } i \in \mathcal{A}^{(m-1)} \\
	\tilde{x}_i^{(m)} = \frac{c'({\mathcal{A}^{(m-1)}}^-)\delta_i^{-k}}{\sum_{j \in \mathcal{A}^{(m-1)}} \delta_j^{-k}}  & \text{otherwise} 
	\end{cases} \\
	&= 
	\begin{cases}
	\hat{x}^{\max}_i & \text{if } i \in \mathcal{A}^{(m-1)} \\ 
	\frac{c'(\mathcal{A}^{(m-1)})}{| \{\tilde{x}_j^{(m)} = \underset{q \in {\mathcal{A}^{(m-1)}}^-}{\min} \tilde{x}_q^{(m)} \} |} & \text{if } \tilde{x}_i^{(m)} = \underset{q \in {\mathcal{A}^{(m-1)}}^-}{\min} \tilde{x}_q^{(m)} \\
	0 & \text{otherwise} 
	\end{cases}
	\end{align}
	If $c'(\mathcal{A}^{(m-1)})/| \{\tilde{x}_j^{(m)} = \underset{q \in {\mathcal{A}^{(m-1)}}^-}{\min} \tilde{x}_q^{(m)} \} | > \hat{x}^{\max}_i$, $i$ is added to $\mathcal{A}^{(m)}$. Therefore, only the $i$ with the lowest $\tilde{x}_i$ will be added to the $\mathcal{A}^{(m)}$. Hence, we are allocating all work to workers with the lowest negative virtual welfare. 

\section{Proof of Lemma \ref{LemmaSC}} \label{sec:LemmaSC}
	In the following, we shall write $\pi^{(k, \boldsymbol{b})}(b_i)$ equivalently as $\pi^{(k, \boldsymbol{\delta})}(\delta_i)$, since $\pi(.)$ is dependent on $\boldsymbol{b}$ through $\boldsymbol{\delta}$. First note that if we linearly join up all the points corresponding to $\{(\delta_i, \pi^{(k, \boldsymbol{\delta})}(\delta_i)): i = 1,\ldots, n\}$ and $\{ (\delta_i, \pi^{(k+1, \boldsymbol{\delta})}(\delta_i)): i=1,\ldots,n\}$ along $\delta_i$, the two lines must either overlap or cross over at least once. If they do not, then $\sum_i \pi^{(k)} = \sum_i \pi^{(k+1)} = 1$ cannot hold. To show that $\pi^{(k, \boldsymbol{\delta})}(\delta_i)$ and $\pi^{(k+1, \boldsymbol{\delta})}(\delta_i)$ cross over \emph{only} once, note that 
	\begin{align}
	x^{(k)}_i(\delta_i) &= 
	\begin{cases}
	\hat{x}^{\max}_i & \text{if } i \in \mathcal{A}^{(k)} \\
	c'(k) \frac{\delta_i^{-k}}{\sum_{j \in {\mathcal{A}^{(k)}}^-} \delta_j^{-k}} & \text{if } i \in {\mathcal{A}^{(k)}}^- 
	\end{cases} \\
	&= \min \left(\hat{x}^{\max}_i, c'(k) \frac{\delta_i^{-k}}{\sum_{j \in {\mathcal{A}^{(k)}}^-} \delta_j^{-k}} \right). \label{eq:min}
	\end{align}
	where \eqref{eq:min} is by virtue of the fact that $x_i = \hat{x}^{\max}_i \implies \lambda_i = c'(k)(\sum_{j \in \mathcal{A}^-} \delta_i^{-k})^{-1} - x_i \delta_i^k \geq 0 \implies \hat{x}^{\max}_i \leq c'(k)\delta_i^{-k}(\sum_{j \in \mathcal{A}^-} \delta_i^{-k})^{-1}$. Since $g_1(\delta) < g_2(\delta) \implies \min(g_1(\delta), \alpha) \leq \min(g_2(\delta), \alpha)$ for arbitrary functions $g_1, g_2$ and constant $\alpha$, the crossing point(s) of $\pi^{(k, \boldsymbol{\delta})}(\delta_i)$ and $\pi^{(k+1, \boldsymbol{\delta})}(\delta_i)$ must be determined solely by $\{(x^{(k)}_i, x^{(k+1)}_i) : i \in {\mathcal{A}^{(k)}}^- \cap {\mathcal{A}^{(k+1)}}^- \}$. Now note that $x^{(k)}_i(\delta_i) \propto \delta_i^{-k}$ if $i \in {\mathcal{A}^{(k)}}^-$. Since the two functions $c'(k)\delta^{-k}$ and $c'(k+1)\delta^{-(k+1)}$ have only one crossing over point for $\delta > 0$, $\pi^{(k)}(\delta_i)$ and $\pi^{(k+1)}(\delta_i)$ must cross over \emph{only} once. 

\section{Proof of Theorem \ref{theorem5}} \label{sec:theorem5}
    From \cite{Myerson1981,Koutsopoulos2013}, we have 
    \begin{equation}
    \E_{\boldsymbol{b}} \sum_i p_i(b_i) = \E_{\boldsymbol{\delta}} \sum_i x_i^{(k)}(\boldsymbol{\delta}) \delta_i.
    \end{equation}
    where $(k)$ reminds us that $x_i$ is dependent on the tuning parameter $k$. Since we are optimizing for $\boldsymbol{x}$ subject to $\sum_i x_i = c$, we can write 
    \begin{align}
    \E_{\boldsymbol{\delta}} \sum_i x^{(k)}_i(\boldsymbol{\delta}) \delta_i &= c \E_{\boldsymbol{\delta}} \sum_i \pi^{(k, \boldsymbol{\delta})}(\delta_i) \delta_i
    =  c\E_{\boldsymbol{\delta}} \E_{\pi^{(k, \boldsymbol{\delta})}} (\delta) \nonumber \\
    &= c\E_{\boldsymbol{b}} \E_{\pi^{(k, \boldsymbol{b})}} \delta(b) \label{eq:totalcost}
    \end{align}
    From Lemma \ref{LemmaSC}, we have $\pi^{(k, \boldsymbol{b})}(b_i) \leq \pi^{(k+1, \boldsymbol{b})}(b_i)$ for all $b_i < t$ and $\pi^{(k, \boldsymbol{b})}(b_i) \geq \pi^{(k+1, \boldsymbol{b})}(b_i)$ for all $b_i > t$, for some $t$. Defining 
    \begin{equation*}
    	G^{(k, \boldsymbol{b})}(b) = \sum_{j: b_j \leq b} \pi^{(k, \boldsymbol{b})}(b_j), 
    \end{equation*}
    it follows that 
    \begin{align}
    	G^{(k, \boldsymbol{b})}(b_i) &\leq G^{(k+1, \boldsymbol{b})}(b_i), \quad b_i < t \nonumber \\
    	1 - G^{(k, \boldsymbol{b})}(b_i) &\geq 1 - G^{(k+1, \boldsymbol{b})}(b_i), \quad b_i > t \nonumber \\
    	\implies G^{(k, \boldsymbol{b})}(b_i) & \leq G^{(k+1, \boldsymbol{b})}(b_i), \quad \forall b_i. \label{eq:G}
    \end{align}
    Now, 
    \begin{align*}
		\E_{\pi^{(k, \boldsymbol{b})}} & \delta(b) - \E_{\pi^{(k+1, \boldsymbol{b})}} \delta(b) \\
		&= \int_0^{\bar{b}} \delta(b) dG^{(k, \boldsymbol{b})}(b) - \int_0^{\bar{b}} \delta(b) dG^{(k+1, \boldsymbol{b})}(b) \\
		&= \delta(b)[G^{(k, \boldsymbol{b})}(b) - G^{(k+1, \boldsymbol{b})}(b)]_0^{\bar{b}} - \\
		&\quad \int_0^{\bar{b}} \delta'(b) G^{(k, \boldsymbol{b})}(b) db + \int_0^{\bar{b}} \delta'(b) G^{(k+1, \boldsymbol{b})}(b) db \\
		&= \int_0^{\bar{b}} \delta'(b) G^{(k+1, \boldsymbol{b})}(b) db - \int_0^{\bar{b}} \delta'(b) G^{(k, \boldsymbol{b})}(b) db \\
		& \geq 0 \quad \text{by \eqref{eq:G}}
    \end{align*} 
    Hence, by \eqref{eq:totalcost}, the total expected cost is a non-increasing function of $k$. 

\bibliographystyle{IEEEtran}
\bibliography{MyCollection}

\end{document}